\let\subparagraph\relax
\theoremstyle{TH}
\newtheorem{open}{Open problem}
\newcounter{alphasect}
\def\alphainsection{0}
\let\oldsection=\subsubsection
\def\subsubsection{%
	\ifnum\alphainsection=1%
	\addtocounter{alphasect}{1}
	\fi%
	\oldsection}%
\renewcommand\thesubsubsection{%
	\ifnum\alphainsection=1%
	Case \arabic{alphasect}%
	\else%
	\fi%
}%
\newenvironment{alphasection}{%
	\ifnum\alphainsection=1%
	\errhelp={Let other blocks end at the beginning of the next block.}
	\errmessage{Nested Alpha section not allowed}
	\fi%
	\setcounter{alphasect}{0}
	\def\alphainsection{1}
}{%
	\setcounter{alphasect}{0}
	\def\alphainsection{0}
}%
\newcommand{\fairwithsharing}[1]{\allowbreak {\sf F-Sharings}$(#1)$}
\newcommand{\propwithsharing}[1]{\allowbreak {\sf Prop-Sharings}$(#1)$}
\newcommand{\efwithsharing}[1]{\allowbreak {\sf EF-Sharings}$(#1)$}
\newcommand{\eqwithsharing}[1]{\allowbreak {\sf Eq-Sharings}$(#1)$}
\newcommand{\conswithsharing}[1]{\allowbreak {\sf Cons-Sharings}$(#1)$}
\newcommand{\propfpowithsharing}[1]{\allowbreak {\sf PropFpo-Sharings}$(#1)$}
\newcommand{\effpowithsharing}[1]{\allowbreak {\sf EFFpo-Sharings}$(#1)$}
\newcommand{\propfpowithshared}[1]{\allowbreak {\sf PropFpo-SharedObj}$(#1)$}
\newcommand{\effpowithshared}[1]{\allowbreak {\sf EFFpo-SharedObj}$(#1)$}
\newcommand{\propdpowithsharing}[1]{\allowbreak {\sf PropDpo-Sharings}$(#1)$}
\newcommand{\efdpowithsharing}[1]{\allowbreak {\sf EFDpo-Sharings}$(#1)$}
\newcommand{\propdpowithshared}[1]{\allowbreak {\sf PropDpo-SharedObj}$(#1)$}
\newcommand{\efdpowithshared}[1]{\allowbreak {\sf EFDpo-SharedObj}$(#1)$}
\newcommand{\fairwithshared}[1]{\allowbreak {\sf F-SharedObj}$(#1)$}
\newcommand{\propwithshared}[1]{\allowbreak {\sf Prop-SharedObj}$(#1)$}
\newcommand{\efwithshared}[1]{\allowbreak {\sf EF-SharedObj}$(#1)$}
\newcommand{\eqwithshared}[1]{\allowbreak {\sf Eq-SharedObj}$(#1)$}
\newcommand{\conswithshared}[1]{\allowbreak {\sf Cons-SharedObj}$(#1)$}
\newcommand{\maxminnRwayidenticalsplitpar}[1]{\allowbreak {\sf MaxMinIdentical}$(#1)$}
\newcommand{\sumgb}[1]{U_{#1}}
\newcommand{\citet}[1]{\cite{#1}}
\newcommand{\citep}[1]{\cite{#1}}
\newcommand{\range}[2]{\in\{#1,\dots,#2\}}
\definecolor{ForestGreen}{rgb}{.13,.54,.13}
\definecolor{BrickRed}{rgb}{.80,.26,.33}
\newcommand{\hard}[1]{\textcolor{BrickRed}{NP-hard [#1]}}
\newcommand{\stronglyhard}[1]{\textcolor{BrickRed}{Strong NP-hard [#1]}}
\newcommand{\complete}[1]{\textcolor{BrickRed}{NP-complete [#1]}}
\newcommand{\easy}[1]{\textcolor{ForestGreen}{#1}}
\begin{document}

\RUNAUTHOR{S. Bismuth et al.}

\RUNTITLE{Fair Division with Bounded Sharing: Binary and Non-Degenerate Valuations}

\TITLE{Fair Division with Bounded Sharing:\\Binary and Non-Degenerate Valuations\footnote{This is an extension of a paper presented in a conference [details redacted for anonymity but are given in the cover letter].
The current version is significantly expanded, adding new results: 
\Cref{thm:cons-np-hard}, 
\Cref{thm:unbounded-binary-ef}, 
\Cref{thm:unbounded-binary-prop-no-sharings},
\Cref{thm:unbounded-binary-eq-no-sharings}
\Cref{thm:unbounded-binary-eq-sharing}, \Cref{thm:unbounded-binary-eq-shared-object}, as well as an extended treatment of the new notions of generically polynomial-time algorithms and generic NP-hardness, and all proofs
omitted from the conference version.}}

\ARTICLEAUTHORS{%
\AUTHOR{Ivan Bliznets}
\AFF{Department of Computer Science,
Ariel University, Ariel, 40700, Israel, \EMAIL{samuelbismuth101@gmail.com}}

\AUTHOR{Samuel Bismuth}
\AFF{Department of Computer Science,
University of Groningen, Groningen, 9747 AG, Netherlands, \EMAIL{todo@gmail.com}}

\AUTHOR{Erel Segal-Halevi}
\AFF{Department of Computer Science,
Ariel University, Ariel, 40700, Israel, \EMAIL{todo@gmail.com}}
} 

\ABSTRACT{%
A set of objects is to be divided fairly among agents with different tastes, modeled by additive utility-functions. If we consider the objects as indivisible, many instances of the decision problem: ``Is there a fair division of the objects among the agents'' are negative. In addition, this question is hard to solve even for most of the special cases. The latter reasons give us a good motivation to relax the problem for which the running time complexity is better, and the number of positive instances (admitting a fair division) will significantly grow. Whereas many works relax the fairness criteria, this paper introduces another relaxation: an agent is allowed to share a \emph{bounded} number of objects between two or more agents in order to attain fairness. 

The paper studies various notions of fairness, such as proportionality, envy-freeness, equitability, and consensus.
We analyze the run-time complexity of finding a fair allocation with a given number of sharings under several restrictions on the agents' valuations, such as: binary, generalized-binary, and non-degenerate.
}%



\KEYWORDS{Fair Division; Polynomial-time Algorithm; Allocation of Indivisible and Divisible Goods} 

\maketitle
\

\section{Introduction}\label{sec:intro}

Fair division with divisible and/or indivisible objects is a fundamental problem in operations research, and has been extensively studied across diverse contexts and applications \citep{sandomirskiy2022efficient,DBLP:journals/mansci/BogomolnaiaMS22,DBLP:journals/mor/BogomolnaiaM23,DBLP:journals/mor/BranzeiS24,DBLP:journals/eor/AzizHMS23,DBLP:journals/eor/CornillyPRV22}.

Consider several siblings who have inherited some assets and need to decide how to allocate them, 
or several parties who form a coalitional government and need to allocate the government ministries, 
or several faculty members who have moved to a new building and need to allocate the offices. 
In all these cases, a set of valuable objects has to be allocated among several agents, who may have different preferences over the objects, and it is important that all agents view the allocation as \emph{fair}.

In many cases, fairness can only be attained by giving fractions of the same object to different agents. For example, if three siblings inherit four identical houses, then fairness requires that each sibling receives one house plus $1/3$ of the fourth house.
Fractional allocation means that some objects must be \emph{shared} among two or more agents. 

Sharing can be implemented in various ways. For example, sharing a house can be implemented by renovating it in a way that will enable all three siblings to live in it simultaneously;
sharing a cabinet ministry is often done by a rotational agreement, in which each party controls the ministry for a fraction of the time. Still, sharing an object is inconvenient, so it is desirable to share as few objects as possible. In the above example, a fair allocation could also be attained by sharing all four houses, giving each of the three siblings $1/3$ of each house, but this allocation is clearly less desirable than the allocation in which only one house is shared.
This motivates the following generic computational problem, which is at the heart of the present research:
\begin{quote}
(*)
\emph{Given $m$ objects, $n$ agents with different valuations over the objects, a fairness notion, and an integer $s\geq 0$, find a fair allocation in which at most $s$ objects are shared, if such an allocation exists.}
\end{quote}
We also consider a variant in which $s$ is the number of \emph{sharings} rather than the number of shared objects (e.g. a single object shared between $10$ agents counts as $9$ sharings).
We assume that agents have linear additive valuations, and that all objects are goods (have non-negative values).
We consider four common fairness notions: 
\emph{proportionality} (each agent values his share as at least $1/n$ of the total value), \emph{envy-freeness} (each agent values his share as at least the share of any other agent), \emph{equitability} (the subjective value of all agents is the same), and \emph{consensus} (all $n$ agents value all $n$ bundles exactly the same).
 see \textbf{\Cref{sec:model}} for the formal definitions. 

The case $s=0$ of (*) is strongly NP-hard, as it can be used to solve the strongly NP-hard problem \textsc{3-Partition}. 
But many real-life problems (e.g. inheritance allocation or cabinet ministries) involve a small number of agents, so there is value in studying the case when $n$ is a small fixed constant.
Even in that case the problem is NP-hard in general, as the case of $s=0$ and $n=2$ agents with identical valuations is equivalent to the NP-hard \textsc{Partition} problem.

At the other extreme, when $s$ is sufficiently large the problem becomes polynomial-time solvable. In particular, when $s=n-1$ it is always possible to find in polynomial time a proportional allocation, an envy-free allocation and an equitable allocation; when $s=n(n-1)$ 
the same holds for a consensus allocation (these results are known from previous work; in \textbf{\Cref{sec:arbitrary}} we provide simple proofs using linear programming).

Experiments on real-life and simulated instances show that most instances admit a fair allocation with fewer sharings than the worst case \citep{DBLP:journals/corr/abs-2204-11753,sandomirskiy2022efficient}.
This motivates the question of deciding whether a specific instance admits a fair allocation with $s$ shared objects or sharings, where $s$ is smaller than the upper bound. In our opening example, as there are $n=3$ siblings, the worst-case upper bound is $n-1=2$, but there exists an allocation in which only $s=1$ house is shared, which is more convenient than having to share two houses.
This challenge is the focus of the present paper. For various settings and classes of valuation functions, we aim to determine the values of $s$ for which (*) can be solved in polynomail time.


\textbf{\Cref{sec:binary-utilities}} presents new results on the \emph{binary valuations} case (every agent values every object at 0 or 1). This case was not handled before, but it is easy to show that, for any constant $n$ and $s$, the existence of a fair allocation for $n$ agents with $s$ sharings / shared objects can be decided in polynomial time by a mixed integer linear program with a fixed number of variables. Some results for the case where $n$ is unbounded are also given in this section.

\textbf{\Cref{sec:generalized-binary}} considers agents with \emph{generalized binary} valuations \citep{DBLP:journals/mss/CamachoFPT23},
also known as \emph{cost valuations} \citep{DBLP:conf/sagt/BotanRSW23}  (for every object $o$ there is a price $p_o$ such that each agent $i$ values $o$ at either $p_o$ or $0$). 
Generalized binary valuations were introduced in the context of the \emph{Santa Claus problem} \citep{bansal2006santa}.
We focus on a subset of the generalized binary valuations in which 
the sum of the utilities is equal for each agent. 
We call these  \emph{equal-sum generalized binary} valuations.
These valuations generalize identical valuations that were studied in \citet{DBLP:journals/corr/abs-2204-11753}; the results there imply that deciding existence of fair allocation with $s\leq n-3$ shared objects, or with $s\leq n-2$ sharings, are both NP-hard for any fixed $n\geq 3$.
The remaining unsolved cases for equal-sum generalized binary valuations are the cases with $s=n-2$ shared objects.
We present a polynomial-time algorithm for deciding the existence of a proportional allocation for $n=3$ agents and $s=1$ shared object.
We note that, in fair item allocation, even the case of three agents is often interesting and non-trivial. For example, an important recent paper \citep{chaudhury2024efx} is devoted to finding an EFX allocation for three agents, whereas the case of four agents is still open.

\textbf{\Cref{sec:non-degenerate}} studies agents with \emph{non-degenerate} valuations (for every two agents, there are no two objects such that the value-ratios are equal). 
Non-degeneracy is arguably a weak requirement, as, informally, almost all valuations are non-degenerate
 \citep{sandomirskiy2022efficient} 
 (see formal statement in
 \Cref{prop:generically-polynomial}).
Polynomial-time solvability for non-degenerate valuations means that almost all instances are ``easy''; this result is in the spirit of smoothed analysis \citep{spielman2005smoothed,Moitra_2011}.
But fPO --- the other assumption made by \citet{sandomirskiy2022efficient} --- is a strong requirement,
as, informally, ``almost all'' allocations are not fPO.%
\footnote{
Formally, even without sharing, the number of fPO allocations is in $O(m^{{n \choose 2} + 2})$ \citep{sandomirskiy2022efficient}, whereas the total number of allocations is at least $n^m$,
so the fraction of fPO allocations goes to $0$ as $m\to\infty$, when $n$ is fixed.
}
Dropping the fPO requirement may enable allocation with fewer sharings, which the agents may prefer to an fPO allocation with many sharings.
This raises the question of whether a fair allocation (not necessarily fPO) can be found efficiently for agents with non-degenerate valuations.
Our findings are mostly negative: for most fairness notions, we prove NP-hardness even for non-degenerate valuations.

Our proofs use an unusual reduction technique, which may be of independent interest.
Usually, NP-hardness of a problem $P_2$ is shown by reduction from a known NP-hard problem $P_1$, where each instance of $P_1$ is transformed to a \emph{single} instance of $P_2$ with the same answer. 
We define \emph{multi-reductions}, in which, for each instance of $P_1$, we construct a large set of instances of $P_2$ with the same answer.
We use these multi-reduction to prove that (unless P=NP) it is \emph{not} true that ``almost all instances of $P_2$ can be solved in polynomial time''.

\textbf{\Cref{sec:truthful}} studies \emph{truthful mechanisms} for fair allocation.
A truthful mechanism is an algorithm which incentivizes the agents to reveal their true valuations. 
Truthfulness is another reason to drop the fPO requirement: it is known that no truthful mechanism can guarantee both fairness and Pareto-efficiency; it may be desired to give up efficiency to get truthfulness. We survey several truthful fair allocation algorithms, and check whether they can be adapted to construct an allocation with bounded sharing. 

The results from this and previous works are summarized and compared in \Cref{tab:nagents}.

\begin{table*}[h!]
\small
\center
\begin{tabular}{|l|l|l|l|l|l|}
\hline
\textbf{Valuations} & \textbf{Allocation} & \textbf{Nb of agents}  & \textbf{Measure} & \textbf{bound} & \textbf{Run-time complexity} \\ \hline \hline \hline

\textbf{Identical} & Fair & Unbounded & sharing & $s$ (any) & \stronglyhard{1} \\ \cline{4-6} 
                   & (with  & & shared object & $s$ (any) & \stronglyhard{1} \\ \cline{3-6} \cline{3-6}
                   & identical & Constant $n$ & sharing & $s \leq n - 2$ & \complete{1} \\ \cline{5-6}
                   & valuations,  & & & $s \geq n - 1 $ & \easy{$O(m+n)$ [cut-the-line]} \\ \cline{4-6} 
                   & all fairness  & & shared & $s \leq n - 3 $ & \complete{1} \\ \cline{5-6}
                   & concepts & & & $s = n - 2 $ & \easy{$O(poly(m, \log (V_1)))$ [1]} \\ \cline{5-6}
                   & coincide.) & & & $s \geq n - 1 $ & \easy{$O(m + n)$ [cut-the-line]} \\ \hline \hline



 \textbf{Arbitrary} & PROP & Constant $n$ & both & $s \geq n-1$ & \easy{LP [\Cref{thm:upperbound-prop}]}  \\
 & &  & &  & \easy{$O(mn\log(n))$ \citep{Even1984Note}}
 \\ \cline{2-2}  \cline{4-6} 

 & EF &  & both & $s \geq n-1$ &  \easy{LP [\Cref{thm:upperbound-ef}]}  \\
& &  & &  & \easy{$O((n+m)^4\log(n+m))$ \citep{orlin2010improved}} \\ \cline{2-2}  \cline{4-6} 


& EQ &  & both & $s \geq n-1$ &\easy{LP [\Cref{thm:upperbound-eq}]}   \\ \cline{2-2}  \cline{4-6} 

& CONS &  & both & $s \geq n (n-1)$ & \easy{LP [\Cref{thm:upperbound-cons}]} \\  
&  &  &  & & \easy{Strong poly \cite{goldberg2020consensus}}  
\\ \cline{4-6} 
&  &  & both & $s \leq n-1$ & \hard{\Cref{thm:cons-np-hard}}  \\ \cline{2-2}  \cline{4-6}


\hline \hline

\textbf{Binary} & EF & Unbounded & both & $s$ (any) & \complete{\Cref{thm:unbounded-binary-ef}} \\ \cline{2-2} \cline{5-6} 
                   & PROP &  &  & $s = 0$ & \easy{$O(poly(m,n))$[\Cref{thm:unbounded-binary-prop-no-sharings}]} \\ \cline{2-2} \cline{6-6} 
                   & EQ & &  &  & \easy{$O(poly(m,n))$[\Cref{thm:unbounded-binary-eq-no-sharings}]} \\ \cline{4-6} 
                    & & & sharing & $s$ (any) & \easy{$O(poly(m,n))$[\Cref{thm:unbounded-binary-eq-sharing}]} \\ \cline{4-6} 
                    & & & shared object & $s$ (any) & \easy{$O(poly(m,n))$[\Cref{thm:unbounded-binary-eq-shared-object}]} \\ \cline{2-6} 
                   & All & Constant $n$ & both & $s$ (any) & \easy{MILP [\Cref{thm:milp-binary}]}
                   \\ \hline \hline 
                 
\textbf{Equal-} & Fair  & Constant $n$ & shared object & $s \leq n-3$ & \complete{see Identical} \\ \cline{5-6} 
\textbf{sum-}&   &  & & $s \geq n-1$ & \easy{Weak poly [see Arbitrary]} \\  \cline{2-6}
\textbf{generalized} & PROP  & 3 & shared object & $s = 1$ & \easy{$O(poly(m, \log (V_1))$ [\Cref{thm:g-binary}]} \\ \cline{2-6}

 & All  & Constant $n$ & sharing & $s \leq n-2$ & \complete{see Identical} 
 \\ \cline{5-6}
 &   &  & & $s \geq n-1$ & \easy{Weak poly [see Arbitrary]} \\ 
\hline 
\hline 

\textbf{Non-} & PROP, EF & Unbounded & sharing & $s$ (any) & \stronglyhard{\Cref{thm:unbounded-n-sharing-nondegenerate}} \\ \cline{4-6}
\textbf{degenerate} &  &  & shared object &$s$ (any) & \stronglyhard{\Cref{thm:shared-strong-np-degenerate}} \\
\cline{2-6}

& PROP, EF & Constant $n$ & sharing & $n\geq 2, s=0$& \complete{\Cref{thm:hardness-fair-with-sharings}} \\ \cline{4-4} \cline{6-6}
& &  & & or $n\geq 3,$ & \\
& &  & shared object & $s\leq n-3$ & \complete{\Cref{thm:hardness-fair-with-shared}} \\  \cline{2-2} \cline{4-6}
& PROP + dPO & & both & $s=0$ & \complete{\Cref{thm:hardness-dpo-with-sharings}} \\ \cline{2-2}
& EF + dPO & &  &  &  \\  \cline{2-2} \cline{5-6}


& Fair + fPO &  &  & $s \geq n-1$ & \easy{$O(m^{poly(n)})$ [2]}  \\ 
\hline

\end{tabular} 

\caption{\label{tab:nagents} 
Run-time complexity of allocating $m$ objects among agents, with a bound on the number of sharings/shared objects. \\
$[1]$ \citep{DBLP:journals/corr/abs-2204-11753} \\
$[2]$ \citep{sandomirskiy2022efficient}
}
\end{table*}

\section{Related work}
The impossibility of achieving fair allocation with indivisible objects has prompted numerous researchers to adopt relaxed fairness criteria and pursue approximate fairness guarantees.

\subsection{Relaxations with indivisible objects}
A common weakening of envy-freeness is 
\emph{Envy-freeness except one good (EF1)}. EF1 was introduced by \citet{budish2011combinatorial}, and a similar idea appeared earlier in \citet{Lipton2004Approximately}.
EF1 has been widely explored, for instance by \citet{Aleksandrov2015Online,oh2018fairly} and others.
A stronger version of EF1, aiming for a global fairness notion through information hiding, was studied by \citet{hosseini2019fair,BliznetsBS24}.
The existence of EF1 allocations that are also Pareto Optimal was shown by \citet{caragiannis2016unreasonable}, and \citet{barman2018finding} extended this to fractional PO.

A common weakening of proportionality is the \emph{Maximin share guarantee (MMS)}. MMS was defined by \citet{budish2011combinatorial}, who showed its existence under a large-market assumption and applied it in real-world course allocation \citep{budish2016course}.
In ``small markets'', \citet{Procaccia2014Fair} showed that MMS allocations might not always exist in certain edge cases. Because of this, most MMS-related work focuses on approximate versions of the guarantee, such as in \citep{amanatidis2017approximation, barman2017approximation, ghodsi2018fair, garg2018approximating, aziz2016approximation, Babaioff2017Competitive, segal2020competitive}.

Traditional microeconomics mainly deals with divisible resources and treats indivisible goods by turning them into ``divisible'' ones through lotteries. This method leads to a weaker form of fairness: allocations are fair ex-ante, meaning in expectation before the lottery takes place. Ex-ante fairness has been studied in various settings. For example, :  \citet{hylland1979efficient}, \citet{abdulkadiroglu1998random},  \citet{Bogomolnaia2001New} looked at fair assignment problems; \citet{budish2013designing} considered its multi-unit constrained modifications; \citet{kesten2015theory} evaluated fairness of tie-breaking in matching markets; \citet{bogomolnaia2019simple} investigated randomized rules for online fair division.

\citet{Brams2013TwoPerson} observed that exact envy-freeness with indivisible goods can be reached by leaving some goods unallocated, while still maintaining certain efficiency guarantees. Their AL method produces an allocation that isn’t Pareto-dominated by any other envy-free allocation (see also \citet{Aziz2015Generalization}).
In a later work \citep{brams2014algorithm}, the same authors have presented an algorithm that is \emph{maximally-proportional}, in that it assigns a proportional bundle to as many agents as possible, while possibly leaving some items unassigned.
Later, \citet{caragiannis2019envy} proved that by setting aside some items, it’s possible to find an allocation satisfying ``Envy-free except any good'' (EFX), a fairness criterion that strengthens EF1 \citep{caragiannis2016unreasonable}.


\subsection{Relaxations with divisible objects}

Most traditional works on fair division either assume that all objects are divisible, or that all objects are indivisible. 
But in recent years, there is a growing interest in ``mixed'' settings; see \citet{liu2024mixed} for a survey.

The works closely related to ours are:
\begin{itemize}
\item 
\citet{bei2021fair} and 
\citet{bei2021maximin} study an allocation problem where some goods are divisible and some are indivisible. In their setting, the difference between divisible and indivisible objects is predefined --- the algorithm is only allowed to divide the objects that are already marked as divisible. This means that an exact-fair allocation may not be possible, so they study approximate-fairness notions. 
\item 
\citet{bhaskar2021approximate} study a mixture of indivisible and divisible chores (objects with negative utilities), and present an algorithm for approximate envy-free allocation.
\item 
\citet{nishimura2023envy} and 
\cite{kawase2023fair} study welfare-maximization problems in settings of mixed goods.
\item 
\citet{bei2023fair} generalize the above model by allowing \emph{subjective divisibility}, wherein each object can be considered divisible by some agents and indivisible by others. Again, exact-fair allocation may not be possible, but they show algorithms for approximate envy-free and maximin-share allocations, focusing mainly on two and three agents.
\item 
\citet{li2024allocating} study a setting in which each agent may have a different ``indivisibility ratio'' (= proportion of items that are indivisible), and the approximate-fairness guarantee for each agent depends on his indivisibility ratio.
This is another way to allow a bounded number of divisible objects. While they bound the number of divisible objects per agent, we study a more global bound in the number of objects.
\item 
\citet{DBLP:conf/atal/LiLLTT24} study the price of fairness in both indivisible and mixed item allocation. 
\end{itemize}
A crucial difference between these works and our paper is that we insist on \emph{exact} fairness. This is important in settings with highly valuable objects, in which approximate fairness may be unacceptable.

\subsection{Exact fairness with divisible objects}

Exact fair allocation with bounded sharing was first studied by 
\citet{Brams1996Fair,brams2000winwin}.
Their \emph{Adjusted Winner (AW)} procedure finds an allocation for $n=2$ agents with at most $s=1$ shared object, that is simultaneously proportional, envy-free, equitable, and \emph{fractionally Pareto-optimal} (fPO: no other fractional allocation is at least as good for all agents and strictly better for some agent).
They also show an example with three agents \citep{Brams1996Fair} where no allocation is simultaneously fPO, envy-free and equitable. This does not rule out the option of satisfying each of these properties on its own.
The AW procedure has been (at least in theory) used in division problems involving divorce cases and international conflicts \citep{Brams1996Camp,Massoud2000Fair}, and it has also been examined through empirical studies \citep{Schneider2004Limitations,Daniel2005Fair}.

For $n\geq 3$ agents, the number of required sharings was studied in an unpublished manuscript of 
\citet{wilson1998fair}. 
He proved the existence of an \emph{egalitarian}
allocation of goods --- an allocation in which all agents have the largest possible equal utility \citep{pazner1978egalitarian} --- with $n-1$ sharings. Egalitarian allocations are proportional but not necessarily envy-free.

Several more recent works have proved a polynomial upper bound on the required number of sharings for other fairness notions (\Cref{sec:bounds} provides complete proofs and references):
\begin{itemize}
\item For proportionality, envy-freeness and equitability, there always exists a fair allocation with $s=n-1$ (sharings or shared objects), and there may not exist a fair allocation with smaller $s$.
\item There always exists a consensus allocation with $s=(n-1)n$ (sharings or shared objects), and there may not exist a fair allocation with smaller $s$.
\end{itemize}
In all cases, an allocation satisfying the worst-case upper bound can be computed in polynomial time (see \Cref{sec:bounds}).

However, several recent works provide polynomial-time algorithms for some other special cases:
\begin{enumerate}
\item 
\citep{DBLP:journals/corr/abs-2204-11753}
consider agents with \emph{identical} valuations. With identical valuations, all fairness notions coincide, and are equivalent to finding a perfect scheduling of $m$ jobs on $n$ identical machines.%
\footnote{
In fact, their result pertains also to the more general case of \emph{uniform machines}, which is equivalent to agents having identical valuations but different entitlements.
}
They develop a polytime algorithm for deciding if there exists a fair allocation with $s=n-2$ shared objects, that is, one fewer than the worst-case upper bound, for any fixed $n\geq 3$. They prove that the $n-2$ is tight, as the problem is NP-hard for any fixed $n\geq 3$ and $s\leq n-3$.
\item 
\citet{sandomirskiy2022efficient} go to the other extreme and consider agents with \emph{non-degenerate} valuations (for every two agents, their value ratios for the $m$ objects are all different).
They also require the allocation to be fPO in addition to being fair.
They prove that, with non-degenerate valuations, the number of fPO allocations with $s$ shared objects is polynomial in $m$ (for every fixed $n$), so it is possible to enumerate all such allocations in polynomial time and check whether one of them satisfies any desired fairness notion.
Therefore, the problem (*) is in P for every fixed $n$.
\cite{misra2021fair} complement this result by proving that, when $n$ is not fixed, the problem is NP-hard even for non-degenerate valuations and $s=0$.
\item 
\citep{goldberg2022consensus} 
study \emph{consensus splitting} --- a partition of  objects into $k$ subsets each of which has a value of exactly $1/k$ for all agents.
They show that computing a partition with at most $(k-1)n$ sharings can be done in polynomial time. However, computing a partition with fewer sharings is hard even for $k=2$: for any fixed $n$ and any $s<n$, it is NP-hard to decide whether a partition with at most $s$ sharings exists.
\end{enumerate}



\section{Preliminaries}
\label{sec:model}
\subsection{Agents, objects, and allocations}
There is a set $[n] = \{1,\ldots,n\}$ of agents and a set $[m]=\{1,\ldots,m\}$ of objects.
For each agent $i\in[n]$ and object $o\in[m]$, the value $v_{i,o}\in \mathbb{Q}$ represents agent $i$'s utility of receiving the object $o\in[m]$ in its entirety. 
The set of all instances is $\mathbb{Q}^{n m}$, representing the set of all $n\times m$ matrices.
~
The total value of agent $i$ to all objects is denoted by $V_i := \sum_{o\in [m]} v_{i,o}$.
In general, the matrix $\mathbf{v}$ may contain values of mixed signs; but in this paper, we focus on allocation of goods and assume that all elements of $\mathbf{v}$ are non-negative.

A \emph{bundle} $\mathbf{x}$ of objects is a vector $(x_o)_{o\in [m]}\in[0,1]^m$, where the component $x_o$ represents the fraction of object $o$ in the bundle. 
Each agent $i\in[n]$ has a \emph{utility function} $u_i$, assigning a numeric utility to each bundle.
The utility functions are assumed to be \emph{linear} and \emph{additive}, which means that $
u_i(\mathbf{x}) =
\sum_{o\in [m]}  v_{i,o}\cdot x_{o}
$. 

An \emph{allocation} $\mathbf{z}$ is a collection of bundles $(\mathbf{z_i})_{i\in [n]}$, one for each agent, with the condition that all the objects are fully allocated. An allocation can be identified with the matrix $\mathbf{z} := (z_{i,o})_{i\in[n],o\in[m]}$  such that all $z_{i,o}\geq 0$ and $\sum_{i\in [n]} z_{i,o} = 1$ for each $o\in[m]$.

\subsection{Fairness and efficiency concepts}
\label{sub:fairness-notions}
We focus on four common fairness concepts. 
An allocation $\mathbf{z}$  is called:

- \emph{Proportional (PROP)} --- if every agent prefers his bundle to the equal division. Formally, for all $i\in[n]$:~~
$u_i(\mathbf{z}_i) \geq V_i/n$. 

- \emph{Envy-free (EF)} --- if every agent prefers his bundle to the bundles of others. Formally, for all $i,j\in[n]$:~~ $u_i(\mathbf{z}_i) \geq u_i(\mathbf{z}_j)$.
Every envy-free allocation is also proportional; with $n=2$ agents, envy-freeness and proportionality are equivalent.

- \emph{Equitable (EQ)} --- if it gives each agent exactly the same value. Formally, for all 
$i,j\in[n]$: $u_i(\mathbf{z}_i)= u_j(\mathbf{z}_j)$.

- \emph{Consensus (CONS)} --- if every agent attributes exactly the same  value to every bundle:
for all $i,j\in[n]$: 
$u_i(\mathbf{z}_j) \equiv U$ for all $i,j\in N$, for some constant $U$.
A consensus allocation is proportional,  envy-free and equitable.
Often a more general notion is considered, in which the number of parts may be different than $n$; 
a \emph{consensus $k$-partition}
is a partition of the objects into $k$ bundles satisfying the consensus condition.
Note that a consensus allocation exists only if the total value is the same for all agents, $V_i \equiv k \cdot U$.

We also consider two common efficiency concepts. 
We say that an allocation $\mathbf{z}$ is \emph{Pareto-dominated} by an allocation $\mathbf{y}$ if $\mathbf{y}$ gives at least the same utility to all agents and strictly more to at least one of them.
An allocation $\mathbf{z}$ is called:

\noindent
- \emph{Fractionally Pareto-optimal (fPO)}: 
if it is not dominated by any  allocation $\mathbf{y}$.

\noindent
- \emph{Discretely Pareto-optimal (dPO)}:
if it is not dominated by any allocation $\mathbf{y}$ with no sharing.

\subsection{Measures of sharing}\label{sub:measures-of-sharing}
If for some $i\in [n]$, $z_{i,o} = 1$, then the object $o$ is not shared --- it is fully allocated to agent $i$. 
Otherwise,  object $o$ is shared between two or more agents.
Throughout the paper, we consider two measures  quantifying the amount of sharing in a given allocation $\mathbf{z}.$

The simplest one is \emph{the number of shared objects}:
\begin{align*}
\big|\left\{o\in[m]\, :\, z_{i,o}\in(0,1)\mbox{ for some }i\in[n]\right\}\big|.
\end{align*}
Alternatively, one can take into account the number of times each object is shared. This is captured by \emph{the number of sharings}
\begin{align*}
\sum_{o\in[m]}\bigg(\big|\{i\in [n]:\, z_{i,o}>0\}\big|-1\bigg) 
\end{align*}
Both measures are zero for discrete allocations. They differ, for example, if only one object $o$ is shared but each agent consumes a bit of $o$: the number of shared objects in this case is $1$ while the number of sharings is $n-1$.
Clearly, the number of shared objects is smaller than the number of sharings
for every allocation.

\subsection{Types of utilities}

Recall that for each agent $i \in [n]$ and object $o \in [m]$, the value $v_{i,o}\in \mathbb{Q}$ represents agent $i$'s utility of receiving the object $o\in[m]$ in its entirety. In general, $v_{i,o}$ can be any value in $\mathbb{Q}$ with no relation with other agent values. We relate to this as \emph{arbitrary valuations}. 
We consider several special classes:
\begin{itemize}
\item \emph{Identical valuations} --- for each object $o \in [m]$, there is a rational number $p_o>0$ such that $v_{i,o} = p_o$ for every agent $i \in [n]$.
\item \emph{Binary valuations} --- Each agent $i \in [n]$ values every object $o \in [m]$ as either $v_{i,o} = 0$ or $v_{i,o} = 1$.
\item \emph{Generalized binary valuations} 
--- for each object $o \in [m]$, there is a rational number $p_o > 0$ such that, for every agent $i \in [n]$,
either $v_{i,o} = p_o$  or $v_{i,o} = 0$.%
\footnote{A similar utility function is used in \cite{DBLP:conf/sagt/BotanRSW23, DBLP:journals/mss/CamachoFPT23}.
}
In the special case \emph{Equal-sum generalized binary valuations},  the sum of values of all objects is the same for all agents.
\item \emph{Non-degenerate valuations} --- 
for every two agents $i,j \in [n]$, there are no two objects $o_1,o_2 \in[m]$ such that the value-ratios are equal ($v_{i,o_1} \cdot v_{j,o_2} = v_{i,o_2} \cdot v_{j,o_1}$).
\end{itemize}

\subsection{Computational problems}
In this paper we consider computational problems of the following kinds.

\begin{definition}
Let F be a fairness criterion (proportionality, envy-freeness, etc.). 

(a) For any fixed integers $n\geq 2$ and $s\geq 0$, 
\fairwithsharing{n,s} is the problem of deciding if a given instance with $n$ agents admits an F allocation with at most $s$ sharings.
\fairwithshared{n,s}  is the same problem where $s$ is the maximum number of shared objects.

(b) 
For any fixed integer $s\geq 0$,
\fairwithsharing{s} is the problem of deciding if a given instance admits an F allocation among $n$ agents (where $n$ is part of the input) with at most $s$ sharings;
\fairwithshared{s} is the same problem where $s$ is the maximum number of shared objects.
\end{definition}
All those problems are obviously in NP, so proving NP-hardness is enough to prove that those problems are NP-complete.

\section{General Additive Valuations}
\label{sec:arbitrary}\label{sec:bounds}
This section considers the general case in which agents may have arbitrary additive valuations.

\subsection{Upper bounds using linear programming}
\label{sub:LP}
In this section, 
we show that for every fairness criterion F among our four criteria, there is a function $S_F(n)$ such that any instance with $n$ agents has an ``F'' allocation with at most $S_F(n)$ sharings.
Most of the results below were proved previously using combinatorial techniques \citep{bogomolnaia2016dividing,barman2018proximity,sandomirskiy2022efficient}.
We present below alternative proofs for all these results using linear programming. The advantage of this technique is that it is more general, and can potentially be used to derive upper bounds for other fairness notion.

The main tool is the following known fact: a bounded feasible linear program 
in equational form, 
with $k$ 
equational 
constraints, has a \emph{basic feasible solution} --- a solution in which at most $k$ variables are non-zero \citep{matousek2007understanding}.
The technique was first applied to fair division by \citet{wilson1998fair}; below we generalize his results.

\begin{lemma}
\label{thm:upperbound}
For any allocation $\mathbf{z^*}$:

(a) There exists an allocation $\mathbf{z}$ with at most $n$ sharings 
in which each agent receives exactly the same utility:
$\forall i\in[n]: u_i(\mathbf{z}_i) = u_i(\mathbf{z^*}_i)$.
The $n$ is tight even for shared objects.

(b) There exists an allocation $\mathbf{z}$ with at most $n-1$ sharings 
in which each agent receives at least as much utility:
$\forall i\in[n]: u_i(\mathbf{z}_i) \geq u_i(\mathbf{z^*}_i)$.
The $n-1$ is tight even for shared objects.
\end{lemma}
\proof{Proof}
For each $i\in[n]$, define the constant $U_i := u_i(z^*_i)$.

(a) 
The allocation $\mathbf{z}$ can be found using the following LP,
in which the variables are the $z_{io}$ for all $i\in[n],o\in[m]$, and the constraints are:
\begin{align}
\label{eq:m+n}
&& \sum_{i=1}^n z_{io} = 1 && \text{for } o\in[m]
\\
\notag
&& \sum_{o=1}^m v_{io} z_{io} = U_i && \text{for } i\in[n]
\\
\notag
&& 
z_{io}\geq 0 && \text{for } i\in[n], o\in [m]
\end{align}

This LP has at least one solution ($\mathbf{z^*}$) and has $m+n$ equational constraints, so it has a solution $\mathbf{z}$ in which at most $m+n$ variables are non-zero: $\sum_{o\in[m]}\left(\big|\{i\in [n]:\, z_{i,o}>0\}\big|\right)\leq m+n$.
By definition of sharing, the number of sharings is smaller or equal than $(m+n)-m=n$.

For tightness, consider an instance with $n$ agents and $n$ goods, where agent $i$ values good $i$ at $n^2-n+1$ and the other $n-1$ goods at $1$. Let $\mathbf{z}^*$ be the allocation giving each agent $1/n$ of each good. This allocation has $n$ sharings and $n$ shared objects, and gives each agent a utility of exactly $n$. But no allocation with less than $n$ shared objects can attain exactly the same utilities: if object $i$ is given entirely to agent $i$, then agent $i$'s utility is more than $n$; if it is given entirely to another agent, then agent $i$'s utility is at most $n-1$.

(b) The above LP can be modified by removing the constraint for agent $n$, and maximizing agent $n$'s utility instead:

\begin{align}
\label{eq:m+n-1}
\text{maximize} && \sum_{o=1}^m v_{no} z_{no}
\\
\notag
\text{subject to}
&& \sum_{i=1}^n z_{io} = 1 && \text{for } o\in[m]
\\
\notag
&& \sum_{o=1}^m v_{io} z_{io} = U_i && \text{for } i\in[n-1]
\\
\notag
&& 
z_{io}\geq 0 && \text{for } i\in[n], o\in [m]
\end{align}
There is a feasible solution 
($\mathbf{z^*}$) in which the objective value is $U_n$, so in every optimal solution the utility of agent $n$ is at least $U_n$ and the utility of each agent $i<n$ is exactly $U_i$.
The LP has $m+n-1$ equational constraints, so it has a solution $\mathbf{z}$ with at most $m+n-1$ non-zeros. In this solution the number of sharings is smaller or equal than $(m+n-1)-m=n-1$.

For tightness, consider an instance with $n$ agents and $n-1$ goods whose value for all agents is $n$. Let $\mathbf{z^*}$ be the allocation that gives each agent $1/n$ of each good.
This allocation has $n-1$ sharings and gives each agent a utility of $n-1$. But in any allocation with less than $n-1$ shared objects, at least one object is given entirely to some agent, and thus the remaining $n-1$ agents must share a total value of at most $(n-2)n < (n-1)(n-1)$; thus at least one remaining agent gets less than $(n-1)$.
\Halmos \endproof

As a corollary, we have two upper bounds (some of them proved in different ways by \citet{sandomirskiy2022efficient}).
\begin{theorem}
\label{thm:upperbound-prop}
Any instance with $n$ agents admits a proportional allocation with at most $n-1$ sharings. The $n-1$ is tight even for shared objects. 
\end{theorem}
\proof{Proof}
Apply 
\Cref{thm:upperbound}(b)
with $\mathbf{z^*}$ the allocation giving each agent $1/n$ of each object. For tightness, consider $n-1$ objects and $n$ agents who value all objects at $1$. A proportional allocation requires each agent to get exactly $(n-1)/n$ of an object, so all objects must be shared.
\Halmos \endproof

\begin{theorem}
\label{thm:upperbound-ef}
Any instance with $n$ agents admits an envy-free allocation with at most $n-1$ sharings. The $n-1$ is tight even for shared objects. 
\end{theorem}

\proof{Proof}
\citet{Bogomolnaia2017Competitive} prove the existence of a competitive  equilibrium (CE) in the so-called \emph{Fisher market with equal budgets} associated with the division problem, 
when valuations are additive but may have different signs.
Equilibrium allocations satisfy the property of \emph{Pareto-indifference}: if $\mathbf{z}^*$ is an equilibrium allocation and $\mathbf{z}$ gives the same utilities to all agents, then $\mathbf{z}$ is an equilibrium as well.
It is known that competitive equilibrium allocations are envy-free, as well as fractionally Pareto-optimal. This allows us to apply \Cref{thm:upperbound}(b) and get an envy-free allocation with at most $n-1$ sharings.
The proof of tightness is the same as in (a).
\Halmos \endproof

The following theorem was proved in an unpublished manuscript which is no longer available \cite{wilson1998fair}; we repeat the proof briefly for completeness.
\begin{theorem}
\label{thm:upperbound-eq}
Any instance with $n$ agents admits  an equitable allocation with at most $n-1$ sharings. 
The $n-1$ is tight even for shared objects.

\end{theorem}
\proof{Proof}
The following LP finds an equitable allocation which, moreover, maximizes the equal value (such an allocation is often called \emph{max-equitable}):
\begin{align}
\label{eq:equitable}
\text{maximize} \sum_{o=1}^m v_{no} z_{no} &&
\\
\notag
\text{subject to} \sum_{i=1}^n z_{io} = 1 && \text{ for } o\in[m]
\\
\notag
{\sum_{o=1}^m v_{io} z_{io}} = {\sum_{o=1}^m v_{no} z_{no}} && \text{for } i\in[n-1]
\\
\notag
z_{io}\geq 0 && \text{for } i\in[n], o\in [m]
\end{align}
It has $m+n-1$ equational constraints; 
hence an equitable allocation with $n-1$ sharings exists. 
Tightness can be shown using the same argument as in \Cref{thm:upperbound}(b). \Halmos 



\endproof

The following result 
was proved constructively by \citet{goldberg2022consensus}.
Note that it does not follow e.g. from theorems in  \cite{bogomolnaia2016dividing,barman2018proximity,sandomirskiy2022efficient}, since those theorems involve Pareto-improvements of the original allocation, while the  consensus condition is not preserved under Pareto-improvements.

\begin{theorem}
\label{thm:upperbound-cons}
For any positive integers $n,k$,
any instance with $n$ agents admits a $k$-consensus partition with at most $n(k-1)$ sharings. 
The $n(k-1)$ is tight even for shared objects.
\end{theorem}

\proof{Proof}
The following LP finds a $k$-consensus partition:

\begin{align}
\label{eq:consensus}
&& \sum_{i=1}^n z_{io} = 1 && \text{for } o\in[m]
\\
\notag
&& \sum_{o\in[m]} v_{i,o} z_{j,o}  = V_{i}/k && \text{for } i\in[n], j\in[k-1]
\\
\notag
&& 
z_{io}\geq 0 && \text{for } i\in[n], o\in [m]
\end{align}

Note that for $j=k$ the equality holds automatically.
Since there are $m+n(k-1)$ constraints, there exists a basic feasible solution with at most $n(k-1)$ sharings.

For tightness, 
consider an instance with $n$ agents and $n(k-1)$ goods.
For each agent, 
$(k-1)$ of the goods are ``big'' 
and the other $(n-1)(k-1)$ goods are ``small''.
All $n$ sets of big goods are pairwise-disjoint, so that each good is ``big'' for exactly one agent
and ``small'' for all other agents.

Each agent values each of his big goods at $k-0.5$
and each of his small goods at $0.5/(n-1)$.
For each agent, the sum of all values is $k(k-1)$, so in a $k$-consensus partition, the value of each bundle should be exactly $k-1$.
However, the value of each big good to its agent is larger than $k-1$, so all goods must be shared.
\Halmos \endproof

Summarizing the above theorems, we have the following worst-case upper bounds on the number of sharings:
\begin{align*}
S_{PROP}(n) & = S_{EF}(n) = S_{EQ}(n) = n-1
\\
S_{CONS}(n) & = n(n-1)
\end{align*}

\subsection{Computing an allocation attaining the worst-case upper bound}
The previous theorems raise the computational question of how to find an allocation with at most $S_F(n)$ sharings given a fairness criterion F.
The simplex method finds such solutions for any LP, and it may be sufficient for all practical purposes, but its worst-case run-time is exponential.
\citet{khachiyan1980polynomial} showed that the ellipsoid method can be used to find an optimal basic feasible solution in weakly-polynomial time, but its practical performance is not very good. 
Interior-point methods perform better both theoretically and practically, but they may find an interior solution (with many nonzeros) rather than a basic feasible solution.
\citet{megiddo1991finding} gives a strongly-polynomial time algorithm that, given a pair of optimal solutions for the primal and dual problems, finds an optimal basic feasible solution.%
 \ifdefined\FULLVERSION
\footnote{
The above references are based on an answer by Kevin Dalmeijer from Operations Research Stack Exchange (\url{https://or.stackexchange.com/a/3129/2576}).
}
\fi
~\citet{bixby1994recovering} give a different algorithm for the same task, that performs well in practice but does not have good worst-case run-time guarantees.

All the above algorithms still require a solution to the original LP. Here the fairness criteria differ greatly in their computational complexity.

\subsubsection{Proportionality.} it is sufficient to find an optimal solutions to \eqref{eq:m+n-1} and its dual where $U_i = V_i/n$ for all $i\in[n]$; this can be done in weakly-polynomial time. Moreover, a proportional allocation with $n-1$ sharings can be found in strongly-polynomial time by reduction to \emph{connected cake-cutting} in the following way. Arrange the objects on an interval in an arbitrary order, where each object corresponds to a homogeneous sub-interval;
find a connected proportional division of the interval; and allocate the objects accordingly. Since a connected partition makes $n-1$ cuts, at most $n-1$ objects are shared. 
When all valuations are positive or all valuations are negative, a connected proportional allocation of the interval can be found using $O(mn\log(n))$ operations by the Even-Paz 
protocol \citep{Even1984Note}. 

\subsubsection{Envy-freeness.}
Applying the proof of \Cref{thm:upperbound-ef} requires finding a competitive equilibrium (CE) in a Fisher market with equal budgets. With strictly-positive valuations, a CE can be found in $O((n+m)^4\log(n+m))$ operations \citep{orlin2010improved}.
Note that reduction to connected cake-cutting (as we did for proportionality) is not very helpful with envy-freeness.
The reduction yields a piecewise-homogeneous cake; for such cakes, the fastest algorithm that we know of for finding a connected EF allocation requires $O(m^n)$ operations \citep{alijani2017envy} and we do not know whether polynomial-time algorithms exist.

\subsubsection{Equitability.}
Optimal solutions to \eqref{eq:equitable} and its dual can be found in weakly-polynomial time; we do not know if a strongly-polynomial time algorithm exists. 
\begin{open}
	Can an equitable allocation with at most $n-1$ sharings or shared objects be found in strongly-polynomial time?
\end{open}

\subsubsection{Consensus.}
\citet{goldberg2022consensus} present a polynomial-time algorithm that finds a $k$-consensus partition with at most $n(k-1)$ sharings, for any integer $k$.
The main step in their algorithm is Gaussian elimination, which can be done in strongly-polynomial time; hence, their algorithm is in fact strongly-polynomial.

\subsection{Computing an allocation with $s$ smaller than the worst-case}
Given a specific instance, can we decide whether there exists an allocation in which the number of sharings / shared objects is smaller than the upper bound?

In view of \Cref{sub:LP}, a first idea that comes to mind is deciding whether the linear program that corresponds to the required fairness notion admits a solution in which the number of non-zero variables is smaller than the worst-case upper bound. However, this problem is NP-hard in general (see \Cref{sec:lp-bounds}).

Some more specific results are provided by \citet{DBLP:journals/corr/abs-2204-11753}, who study the case of identical valuations. In that case all four fairness notions coincide, and the worst-case upper bound for both sharings and shared objects is $n-1$. 
\citet{DBLP:journals/corr/abs-2204-11753} prove that, when $s$ is the number of \emph{sharings}, deciding the existence of a fair allocation with $s\leq n-2$ is NP-hard, which implies that the computational results for sharings are tight.

In contrast, when $s$ is the number of \emph{shared objects}, finding a fair allocation with for agents with identical additive valuations is polynomial for $s=n-2$ and NP-hard for $s\leq n-3$. So for general additive valuations, the case of $n-2$ shared objects remains open.

\citet{goldberg2022consensus} prove that
it is NP-hard to compute a $2$-consensus allocation with at most $OPT+n-1$ sharings, where $OPT$ is the optimal number of cuts for the instance. 
These results do not imply hardness for $n$-consensus allocation with shared objects. We provide such a hardness proof below.

\begin{theorem}\label{thm:cons-np-hard}
 For any fixed number $n \geq 2$ of agents  with additive valuations, \conswithshared{$n,n-1$} and \conswithsharing{$n,n-1$} are both NP-complete.
\end{theorem}
\proof{Proof}
Membership in {\sf NP} is obvious.
We show a reduction from the $n$-way number partitioning problem, in which given $m$ items of size $x_i$ for $i \in [m]$ summing to $S$, the goal is to decide whether there exist a partition of the items into $n$ bins with equal sum ($S/n$).

Given an instance of the $n$-way number partitioning problem, we construct an instance of \conswithshared{$n,n-1$} with $m+(n-1)$ objects and $n$ agents with the following valuations (where $M$ a sufficiently large integer):

\begin{center}
\begin{tabular}{|c|c|c|c|}
	\hline
	& $m$ number objects & $n-1$ extra objects & Total value \\
	\hline
	Agents $1,\ldots,n-1$:	& $x_i$ & $M$ & $S+(n-1)M$ \\
	\hline
	Agent $n$:           	& $x_i\cdot(1+(n-1)M/S)$ & $0$ & $S+(n-1)M$ \\
	\hline
\end{tabular}
\end{center}

Assume first that there is a solution to the $n$-way number partitioning problem. Take the $n$ bins of the solution and put all the objects of each bin into a different bundle.
Cut the $n-1$ extra objects into $n$ subsets of size $(n-1)/n$ of an object each (this can be done using $n-1$ sharings) and put each subset into a different bundle. 
This yields a solution to \conswithshared{$n,n-1$} and to \conswithsharing{$n,n-1$}.

Assume second that the generated instance has an $n$-consensus partition with at most $n-1$ sharings, or at most $n-1$ shared objects.
In either case, each of the $n-1$ extra objects must be shared otherwise one of these item is entirely allocated to one bundle, and the allocation is not a consensus allocation  for a sufficiently large $M$. This requires $n-1$ sharings, which means that the allocation cannot have any sharings in the $m$ number objects.
Since agent $n$ values the extra objects at $0$, in his eyes only the $m$ number objects matter. Since the allocation is CONS, there must be $n$ bundles in which the $m$ number objects are equally partitioned. This is a solution to the $n$-way number partitioning problem.
\Halmos \endproof

\begin{open}
Given some fixed $n \geq 3$ and $n$ agents with arbitrary additive valuations, what is the run-time complexity of the following problems:
\begin{itemize}
	\item \propwithshared{n, n-2};
	\item \efwithshared{n, n-2};
	\item \eqwithshared{n, n-2};
	\item \conswithshared{n, s} with $s\in [n,n(n-1))$;
	\item \conswithsharing{n, s} with $s\in [n,n(n-1))$?
\end{itemize}
\end{open}
In the following sections we provide partial answers to this question for specific classes of utilities. 

\section{Binary Additive Valuations} \label{sec:binary-utilities}
This section considers agents with additive valuations that are also \emph{binary}, that is, assign a value of either $0$ or $1$ to each object.
\subsection{Unbounded $n$}
\label{sec:binary-unbounded}
We first assume that $n$ is unbounded (i.e. a part of the input), and provide results for envy-freeness.

The results in \citet{DBLP:journals/ai/AzizGMW15, DBLP:conf/aaai/HosseiniSVWX20} imply that 
\efwithsharing{s=0} with binary utilities is NP-complete (by reduction from Exact 3-Cover or Equitable Coloring). We extend their results for any $s\geq 0$.
\begin{theorem}
	\label{thm:unbounded-binary-ef}
    For every fixed integer $s\geq 0$, even when all agents have additive binary valuations, \efwithsharing{s} and \efwithshared{s} are NP-complete.
\end{theorem}
\proof{Proof}

We apply a reduction from \efwithsharing{s=0}. Given an instance of \efwithsharing{s=0}, we construct an instance of \efwithsharing{s} by adding $s$ pairs of new agents, and $s$ additional objects, such that each pair of new agents values a single additional object at 1, and does not value any other object.

Assume first that there exists an allocation for \efwithsharing{s=0}. For each pair of new agents, split the corresponding additional object in half, such that no new agent envies his partner. Given the existing allocation for \efwithsharing{s=0}, no agent envies. Note that there are exactly $s$ sharings, one for each pair of new agents.

Assume second that there exists an allocation for \efwithsharing{s}. The $s$ additional objects must be shared among the pair of new agents since otherwise, one agent will envy the second. So there are at least $s$ sharings, one for each additional object. Therefore, for the other agents, no sharing is allowed, so it is a solution to \efwithsharing{s=0}.

Analogous arguments imply that \efwithshared{s} is NP-complete.
\Halmos \endproof

In contrast, for PROP the problem can be solved in polynomial time for $s=0$.

\begin{theorem}\label{thm:unbounded-binary-prop-no-sharings}
	With binary additive valuations,
    \propwithsharing{s=0} can be solved in polynomial time.
\end{theorem}
\proof{Proof}
We show a reduction to maximum integral network flow:
there is an arc from the source to each agent $i$ with capacity $ \lceil V_i/n \rceil $; 
from each agent $i$ to each object he values at $1$ with capacity $\infty$; 
and from each object to the sink with a capacity $1$.
We prove that there exists a flow with total size $\sum_{i=1}^n \lceil V_i/n \rceil$ if and only if there exists a PROP allocation with no sharings.

Assume first that there is an integral flow
with size $\sum_{i=1}^n \lceil V_i/n \rceil$, which means that all the arcs from the source to the agents are saturated. To each agent $i$, we assign all objects corresponding to arcs with a positive flow. This flow must be $1$ due to the network structure. Note that the agent values all these objects at $1$, otherwise there would be no arc from the agent to the object.
Since the arcs from the source to the agents are saturated, every agent's bundle size is $\lceil V_i/n \rceil \geq V_i/n$, so the allocation is PROP.

Assume second that the instance has a PROP allocation $A$ with no sharing. 
We construct a partial allocation $A'$ by removing, from each agent $i$, all objects he values at $0$, and possibly some objects he values at $1$, until his value becomes exactly $\lceil V_i/n \rceil$. 
Note that $A'$ is a partial allocation as some objects remain unassigned.
We now construct an integral flow as follows. 
For each object $o$, set the flow in the arc $i\to o$  to $1$ iff the object $o$ is assigned to the agent $i$ in allocation $A'$.
For each agent $i$, set the flow from the source to $i$ to the number of objects assigned to $i$ in $A'$, which is exactly $\lceil V_i/n \rceil$ by construction.
For each object $o$, set the flow from $o$ to the sink to $1$.
The flow is valid, since every object is allocated to exactly one agent.
The arcs from the source to the agents are $ \lceil V_i/n \rceil $, hence, they are saturated and the flow size is $\sum_{i=1}^n \lceil V_i/n \rceil$,
\Halmos \endproof

To extend the network-flow algorithm to solve \propwithsharing{s} for $s\geq 1$, we would need to make the capacity of each edge from the source to agent $i$ equal $V_i/n$, and apply 
a variant of the integer network flow which finds the maximum network flow with at most $s$ non-integral edges.
Currently, we do not know of any polynomial-time algorithm for this problem. 
\begin{open}
	(a) Is there a polynomial-time algorithm that, given a flow network and a positive integer $s$, finds a maximum flow in which at most $s$ arcs have a non-integral flow (if such a flow exists)?
		
	(b) What is the running time of 
	\propwithsharing{s} and \propwithshared{s} for any $s\geq 1$, for agents with binary valuations?
\end{open}

For EQ the problem is polynomial too, for any $s\geq 0$. We start from $s=0$.

\begin{theorem}\label{thm:unbounded-binary-eq-no-sharings}
\eqwithsharing{s=0} with binary utilities can be solved in polynomial time.
\end{theorem}
\proof{Proof}
We check, for each $k \in 0,\dots, \lfloor m/n \rfloor$, whether there exists an equitable allocation with common value $k$. This can be done by reduction to maximum-weight matching.
We construct a weighted bipartite agent-object graph in which each agent has $m$ copies. $k$ copies are connected only to objects that the agent values at 1, and their weight is 1; $m-k$ copies are connected only to objects the agent values at 0, and their weight is $\epsilon = 1/(2m)$.\footnote{These $\epsilon$-edges are required since sometimes, there is no EQ allocation where all agents get objects they value at 1, but there are EQ allocations where some agents get objects they value at 0.}. We show that an EQ allocation with common value $k$ corresponds to a matching with weight 
$n k + \epsilon \cdot (m-nk)$.

Assume first that we have an EQ allocation with common value $k$. Then we can match each object $o$ received by an agent $i$
to one of the $m$ copies of agent $i$, depending on $v_{i,o}$: if 
$v_{i,o} = 1$ then the object is matched to one of the $k$ copies along an edge with weight $1$, otherwise it is matched to one of the $m-k$ copies along an edge with weight $\epsilon$.
As each agent $i$ gets $k$ objects he values at $1$, the total weight of the matching is $n k + \epsilon \cdot (m-nk)$.

Assume second that a matching with weight at least
$n k + \epsilon \cdot (m-nk)$ exists.
We construct an allocation by giving each agent $i$ all the objects his copies are matched to. We now show that this allocation is equitable with value $k$.
Since $m\epsilon < 1$,
the matching must contain 
at least $n \cdot k$ edges with weight 
$1$. This means that each of the $k$ copies of each of the $n$ agents must be matched through an edge of weight $1$, so each agent gets $k$ objects he values at $1$.
Moreover, the remaining $(m-nk)$ objects must be matched through an edge of weight $\epsilon$, which means that each of these objects must be allocated to an agent who values it at $0$. 
Hence, we have a complete and equitable allocation with common value $k$.
\Halmos \endproof

To handle the case $s\geq 1$ we use the following lemma.
\begin{lemma}\label{lem:binary-struct}
	If there is an allocation in which all agents have a bundle with an integer value, then there exists an allocation with the same values and with no sharings.
\end{lemma}
\proof{Proof} 
Given an allocation in which all the agents have an integer-valued bundle, we construct an undirected graph where the nodes are the agents, and there is an edge between two agents for each object they share (if two agents share two objects, then there will be two parallel edges between them).
For each edge $i-j$ corresponding to a shared object $o$, we call $i$ a \emph{$0$-endpoint} if $v_{i,o}=0$ and a \emph{$1$-endpoint} $v_{i,o}=1$, and similarly for $j$.
We call an edge a \emph{$00$-edge} if its two endpoints are $0$-endpoints; \emph{$11$-edge} if its two endpoints are $1$-endpoints; and \emph{$10$-edge} otherwise.
We describe a process that iteratively modifies the allocation and the corresponding graph until all edges are removed, which means that the resulting allocation has no sharings.

First, we remove the $00$-edges. For each $00$-edge $i-j$ corresponding to shared object $o$, we move the fraction of $o$ held by $i$ to agent $j$. This removes the edge and does not change the value of any agent.

Next, we remove cycles of $11$-edges. For each cycle of $11$-edges, we compute the smallest fraction of a shared object that is held by one of the agents along the cycle; denote this fraction by $r$. We now move a fraction $r$ of each shared object along the cycle, from the agent who holds it now to the agent following it in the cycle. Since all agents in the cycle value all shared objects at $1$, each agent gives $r$ and receives $r$, so the total value of all agents does not change. By the selection of $r$, at least one edge from the cycle vanishes.

Next, let $i-j$ be any edge remaining in the graph. This edge must have at least one $1$-endpoint, since we removed all $00$-edges; suppose w.l.o.g. that this endpoint is $j$. By the lemma assumption, the total value held by $j$ is an integer, but the edge $i-j$ indicates that $j$ receives a fraction of an object he values at $1$. Therefore, $j$ must receive a fraction of another object, so $j$ must be a $1$-endpoint in another edge, say $j-k$. If either $i$ or $k$ is a $1$-endpoint of the corresponding edge, then by similar arguments, it must be a $1$-endpoint in another edge. As we already removed all cycles of $11$-edges, we can conclude that all remaining edges must be contained in paths that start and end with $0$-endpoints.
For each such path, let $r$ be the smallest fraction of a shared object that is held by one of the agents along the path. Starting with one of the $0$-endpoints, move a fraction $r$ of each shared object from the agent who holds it now to the agent following it in the path. All agents in the path except the endpoints value all shared objects at $1$, whereas the endpoints value the objects they give or receive at $0$. Therefore, the total value of all agents does not change. By the selection of $r$, at least one edge from the path vanishes.

Proceeding this way, we must eventually have a graph with no edges at all, which corresponds to an allocation with no sharings. 
\Halmos \endproof
\begin{theorem}\label{thm:unbounded-binary-eq-sharing}
    For every fixed number $s$, \eqwithsharing{s} with binary utilities can be solved in polynomial time.
\end{theorem}

\proof{Proof} 
We consider two cases, depending on the number of agents.

Case 1: $n > 2 s$.
Then at least one agent does not share any object, which implies that his bundle has an integer value. By EQ, since all the agents' bundle values are equal, all the bundles have an integer value. 
We run \eqwithsharing{s=0} and return the resulting allocation. The running time is polynomial by \Cref{thm:unbounded-binary-eq-no-sharings}, and the outcome is correct by \Cref{lem:binary-struct}.

Case 2: $n \leq 2 s$. Then for every $n \leq 2s$, we can use the polynomial time algorithm for fixed $n$,  described in \Cref{thm:milp-binary} below.
There are at most $2 s$ polynomial-time runs, so overall, the run-time is polynomial.
\Halmos \endproof

The proof of \Cref{thm:unbounded-binary-eq-sharing} is not directly applicable to the shared objects setting, since even for $s=1$, we cannot claim that there is at least one agent with an integer bundle value. To handle this setting, we use a different lemma.

\begin{lemma}\label{lem:eq-shared}
Let $\mathbf{z}$ be an equitable allocation of $[m]$ (the set of all objects). 
Let $S\subseteq [m]$ be the set of objects shared in $\mathbf{z}$.
Then there is a subset $T \subseteq S$ such that there is an equitable allocation of $[m] \setminus T$ with no shared objects at all.
\end{lemma}
\proof{Proof} 
Denote the common value in $\mathbf{z}$ by $V$, and denote $k \coloneq \lfloor V \rfloor$.
Let $\mathbf{z'}$ be the allocation of $[m] \setminus S$ induced by $\mathbf{z}$. 
In $\mathbf{z'}$ no objects are shared, so the value of every agent's bundle is an integer $\leq k$.
If the value of some agent $i$ in $\mathbf{z'}$ is $k-h$, for some integer $h \geq 1$, then we say that $i$ has $h$ ''holes``
(as $i$ is missing $h$ whole objects to reach the value of $k$).
In allocation $\mathbf{z}$, every hole of agent $i$ is ''filled`` by at least two objects of $S$ which are worth 1 to agent $i$.
We map each hole to the objects used for filling it.
For an agent with one hole, this mapping is unique;
for an agent with two or more holes, we select one possible mapping arbitrarily (e.g. if in $\mathbf{z}$ some agent $i$ has $0.5$ of $a$, $0.6$ of $b$ and $0.9$ of $c$, then $i$ has two holes; one option is to map the first hole to $a,b$ and the second hole to $b,c$).

Each hole is mapped to at least two objects; each two holes are mapped to at least three objects; in general, each $t$ holes are mapped to at least $t+1$ objects. Hence, by Hall's theorem, there is a matching between holes and objects, in which each hole is matched to a unique object. Each hole of agent $i$ is matched to a unique object worth 1 to agent $i$. 
By allocating the objects to the matched agents, we get an equitable allocation $\mathbf{z''}$ with common value $k$.
Let $T$ denote the objects from $S$ that are not matched to any hole. Then $\mathbf{z''}$ is an allocation of $[m] \setminus T$ with no shared objects. 
\Halmos \endproof

\begin{theorem}\label{thm:unbounded-binary-eq-shared-object}
    For every fixed number $s$, \eqwithshared{s} with binary utilities can be solved in polynomial time.
\end{theorem}

\proof{Proof} 
For every subset $T$ of at most $s$ objects in increasing order of their cardinality, we look for an equitable allocation of $[m] \setminus T$  with no shared objects. 
By \Cref{thm:unbounded-binary-eq-no-sharings}, we can solve \eqwithsharing{s=0} in polynomial time. Since we run this procedure at most $2^s$ times, the total running time is polynomial in $m$ and $n$ for any fixed $s$.

For every subset $T$, If an equitable no-sharing allocation of $[m] \setminus T$ is found, 
we allocate the shared objects in 
$T$ as follows: for each object in $T$, if an agent values it at 0, we give him the object. If no agent values it at 0, we share the object equally among all the agents. The resulting allocation remains equitable, and we can answer ``yes''.

If no equitable allocation is found for any $T$, then
by \Cref{lem:eq-shared} there is no equitable allocation, so we can answer ``no''.
\Halmos \endproof

We do not yet have results for consensus allocation with an unbounded number of agents.
\begin{open}
What is the running time of \conswithsharing{s} and \conswithshared{s} for any $s\geq 0$, for agents with binary valuations?
\end{open}

\subsection{Fixed $n$}
When the number of agents $n$ is fixed, finding a PROP, EF, EQ or CONS allocation becomes polynomial, 
for every fixed number of agents $n$ and number of sharings/shared objects  $s$.

\begin{theorem}\label{thm:milp-binary}
For every fixed number of agents $n$ and number of sharings $s$:

(a) \propwithsharing{n,s}, \efwithsharing{n,s}, \eqwithsharing{n,s}, \conswithsharing{\allowbreak n,s} with binary utilities can be solved in polynomial time. 

(b) \propwithshared{n,s}, \efwithshared{n,s}, \eqwithshared{n,s}, and \conswithshared{n,s}
can be solved in polynomial time. 
\end{theorem}
\proof{Proof} \footnote{We are grateful to Rohit Vaish for the proof idea for $s=0$.}
If $s\geq n-1$ then the theorem follows from the results in \Cref{sec:bounds}. Therefore we assume $s\leq n-2$.

For subset of agents $N\subseteq [n]$, 
let $M_N$ be the set of objects that are valued at $1$ by all and only the agents in $N$; there are $2^n$ such sets.

We brute-force all possibilities of $\ell\leq s$ objects to share. 
Each shared object can belong to any one of the subsets $M_N$,
and all objects in the same subset are equivalent,
so it is enough to consider at most $(2^n)^{\ell}\leq 2^{ns}$ cases.
We call the $\ell$ shared objects $o_1, o_2, \dots, o_{\ell}$.

In addition, for each shared object $o_k$ for $k\in[\ell]$, we denote by $H_k$ the nonempty subset of agents who get a non-zero fraction from $o_k$. Overall there are $2^n-1$ options for each $H_k$, so at most $(2^n-1)^{\ell}< 2^{ns}$ cases to check.
Overall, for fixed constants $n,s$ we consider only a constant number of cases. 

For part (a), where $s$ is the number of \emph{sharings}, we verify that $\sum_{k \in [\ell]} (|H_k| - 1) \leq s$;
if this does not hold, we discard this case.
For part (b), where $s$ is the number of \emph{shared objects}, this check is not required, it is sufficient that $\ell\leq s$.

For each case, we construct a mixed integer linear program with the following variables:
\begin{itemize}
    \item For every $N \subseteq [n]$ and each agent $i \in [n]$, create an integer variable $x_{i,N}$ representing how many indivisible objects in $M_N$ agent $i$ obtains.
    \item For each $k\in[\ell]$ and each agent $i \in H_k$  create a real variable $y_{i,k}$ representing the fraction agent $i$ gets from shared object $o_k$.
\end{itemize}

Now, we describe the constraints for variables $x_{i,N}, y_{i,k}$ that are true if and only if the corresponding allocation distributes all items among agents: 
\begin{itemize}
    \item For all $N \subseteq [n]$ and $i\in [n]$, 
    $x_{i,N} \geq 0$ and $x_{i,N}$ is an integer number;
    \item For all $N \subseteq [n]$,
    $\sum_{i\in [n]} x_{i,N} = |M_N\setminus \{o_1,\ldots,o_{\ell}\}|$ --- we distribute all non-shared objects from $M_N$ among all agents;
    \item For all $ k  \in [\ell]$ and $i \in H_k$,
    $y_{i,k}\geq 0$  
     --- each agent in $H_k$ gets a non-negative fraction from $o_k$;
    \item For all $ k  \in [\ell]$, $\sum_{i\in H_k} y_{i,k}=1$ --- the divisible object $o_k$ is completely distributed among the agents in $H_k$.
\end{itemize}

Next, we add equations for computing the value each agent $i$ attributes to each bundle $j$
\begin{align*}
u_{i,j} = \sum_{N: N \ni i} 
\bigg( 
x_{j,N} + \sum_{k: ~ H_k \ni j, ~ o_k\in M_N} y_{j,k} 
\bigg)
&&
\forall i,j\in[n].
\end{align*}
The value of agent $i$ is determined only by objects in sets $M_N$ for which $N$ contains $i$. For each such $N$, we add the number of objects given completely to $j$ ($x_{j,N}$), and the fractions of divisible objects $o_k$ given to $j$.
Based on these equations, it is easy to write constraints for any desired fairness notion, according to the definitions in \Cref{sub:fairness-notions}.

An MILP can be solved in polynomial time for any constant number of integer variables \citep{DBLP:journals/mor/Lenstra83}[Section 5].
The number of variables in our MILP-s is bounded by some function of $n$ and $s$, which are fixed constants. So all our MILP-s are solvable in polynomial time. 
\Halmos \endproof

\section{Equal Sum Generalized Binary Valuations}
\label{sec:generalized-binary}
We recall the definition of the generalized binary valuations: for each object $o \in [m]$, there is a rational number $p_o>0$, such that for every agent $i \in [n]$ and object $o \in [m]$,  $v_{i,o} \in \{0, p_o\}$. 
We assume that the sum of the utilities is equal for every agent, so for every two agents, $V_i=V_j$. 

Our main positive result in this section is for the case $n=3, s=1$.

\begin{theorem}\label{thm:g-binary}
\propwithshared{3,1} with equal-sum generalized binary valuations can be solved in polynomial time.
\end{theorem}

To design our algorithm, we use another algorithm for the max-min variant of the $n$-way partition problem in which $s=1$ object is allowed to be shared. The problem is defined as follows: given 
$m$ objects 
and
$n=3$ agents with identical valuations,
return an allocation 
with at most $s=1$ shared object, 
for which the smallest bundle value is as large as possible. We call this problem \maxminnRwayidenticalsplitpar{3, 1}.
A polynomial-time algorithm for \maxminnRwayidenticalsplitpar{3, 1} is given in  \citep{DBLP:journals/corr/abs-2204-11753}. Their proof uses two structure Lemmas, that we will use later:
\begin{lemma}\label{lem:maxminstructure}
    In any instance of \maxminnRwayidenticalsplitpar{3, 1} either the output is perfect (all the bundle values are equal), or, the shared object is shared only between the two smallest bundles, say bundle 1 and bundle 2, and their values are equal.
\end{lemma}
\begin{lemma}\label{lem:largestItems}
With identical valuations, 
for every allocation with $s = 1$ shared object and bundle sums $b_1,b_2,b_3$, there exists an allocation with the same bundle sums $b_1,b_2,b_3$ in which only the \emph{highest-valued} object is shared. 
\end{lemma}

When there are three agents, it is easy to visualize their valuations using a table that we introduce next. Consider \cref{tab:g-binary}, that divide the set of objects into seven different categories, from ${\cal X}_1$ to ${\cal X}_7$.
\begin{table}[H]
    \centering
    \begin{tabular}{c || c || c || c }
       ${\cal X}_l$ & $\sum_{j \in {\cal X}_l} u_1(x_j)$ & $\sum_{j \in {\cal X}_l} u_2(x_j)$ & $\sum_{j \in {\cal X}_l} u_3(x_j)$ \\
       \hline
       \hline
       ${\cal X}_1$  & $\sumgb{1}=\sum_{j \in {\cal X}_1} p_j$ & $\sumgb{1}=\sum_{j \in {\cal X}_1} p_j$ & $\sumgb{1}=\sum_{j \in {\cal X}_1} p_j$\\
       \hline
       ${\cal X}_2$  & $\sumgb{2}=\sum_{j \in {\cal X}_2} p_j$ & $\sumgb{2}=\sum_{j \in {\cal X}_2} p_j$ & 0 \\
       \hline
       ${\cal X}_3$  & $\sumgb{3}=\sum_{j \in {\cal X}_3} p_j$ & 0 & $\sumgb{3}=\sum_{j \in {\cal X}_3} p_j$\\
       \hline
       ${\cal X}_4$  & 0 & $\sumgb{4}=\sum_{j \in {\cal X}_4} p_j$ & $\sumgb{4}=\sum_{j \in {\cal X}_4} p_j$\\
       \hline
       ${\cal X}_5$  & $\sumgb{5}=\sum_{j \in {\cal X}_5} p_j*$ & 0 & 0\\
       \hline
       ${\cal X}_6$  & 0 & $\sumgb{6}=\sum_{j \in {\cal X}_6} p_j*$ & 0\\
       \hline
       ${\cal X}_7$  & 0 & 0 & $\sumgb{7}=\sum_{j \in {\cal X}_7} p_j*$\\
    \end{tabular}
    \caption{equal-sum generalized binary utilities for three agents}
    \label{tab:g-binary}
\end{table}
There is no interest in assigning an object to an agent that values it at zero, so we automatically assign objects in ${\cal X}_5$ to agent 1, objects in ${\cal X}_6$ to agent 2, and objects in ${\cal X}_7$ to agent 3. To notify that the entire set of objects is assigned to an agent, we add a star to the corresponding cell.
Denote by $V$ the sum of the utilities. Note that:
$$
V=\sumgb{1}+\sumgb{2}+\sumgb{3}+\sumgb{5}=\sumgb{1}+\sumgb{2}+\sumgb{4}+\sumgb{6}=\sumgb{1}+\sumgb{3}+\sumgb{4}+\sumgb{7},$$ 
which implies
$$
\sumgb{3}+\sumgb{5}=\sumgb{4}+\sumgb{6} ; \text{\hspace{0.3in}} \sumgb{2}+\sumgb{6}=\sumgb{3}+\sumgb{7}; \text{\hspace{0.3in}} \sumgb{2}+\sumgb{5}=\sumgb{4}+\sumgb{7}.
$$
Assume w.l.o.g. that $\sumgb{5} \geq \sumgb{6} \geq \sumgb{7}$, so $\sumgb{2} \leq \sumgb{3} \leq \sumgb{4}$. 


Using this notation, we present  the algorithm 
for proving \Cref{thm:g-binary}.
The algorithm is based on a detailed case analysis, which we provide in  \Cref{app:g-binary}.
In general, there are three main cases:
\begin{itemize}
    \item $\sumgb{1} \leq \sumgb{4}$ --- there is always a PROP allocation. The intuitive meaning of this case is that some set of objects (namely ${\cal X}_4$) is worth a lot to agents 2 and 3, but worth little to agent 1. We can allocate these objects among agents 2 and 3 in any way we like with a single shared object, and use the other objects (which are less valuable) to compensate agent 1 without additional splits (see \ref{sec:1-smaller-4} in \Cref{app:g-binary});
    \item $\sumgb{4} < \sumgb{1} \leq \frac{2}{3} \cdot V$ --- there is always a PROP allocation. 
    The intuitive meaning of this case is that there is a set of objects objects (namely ${\cal X}_1$) that are worth a lot to all three agents, but the sum of these objects is at most $2/3$ of the total value $V$. Therefore, it is enough to allocate this set among two of the three agents (giving each of them at least $V/3$), and use the other objects to compensate the third agent (see \ref{sec:4-smaller-1} in \Cref{app:g-binary});
    \item $\sumgb{1} > \frac{2}{3} \cdot V$.
The intuitive meaning of this case is that there is a set of objects objects (namely ${\cal X}_1$) that are worth a lot to all three agents, and their sum is larger than $2/3$ the total value. Therefore, we must partition these objects among all three agents. To do this, we use \maxminnRwayidenticalsplitpar{3,1}.    
If 
the largest bundle sum is at most
$\frac{\sumgb{1}+\sumgb{2}+2 \cdot (\sumgb{3}+\sumgb{4}+\sumgb{5}+\sumgb{6})}{3}$, then a PROP allocation exists and we answer ``yes'';
otherwise, we prove that no PROP allocation can exist, so we answer ``no'' (see \ref{sec:1-big} in \Cref{app:g-binary}).
\end{itemize}

This is a polynomial time algorithm for \propwithshared{3,1} with equal-sum generalized binary utilities, proving \Cref{thm:g-binary}.

\begin{open}
Given some fixed $n\geq 3$,
what is the run-time complexity of the problems
\propwithshared{n,n-2}, 
\efwithshared{n,n-2}, \eqwithshared{n,n-2}, and \conswithshared{n,n-2} for generalized binary valuations?
\end{open}

\section{Non-Degenerate Valuations}
\label{sec:non-degenerate}

As mentioned in the introduction, 
\citet{sandomirskiy2022efficient}
prove that problems \propfpowithsharing{n,s}, \effpowithsharing{n,s}, \propfpowithshared{\allowbreak n,s}, \effpowithshared{n,s}
are solvable in time $O(poly(m))$ for any fixed $n$ and $s$, whenever the valuations are \emph{non-degenerate}, and argue --- somewhat informally --- that ``almost all'' valuations are non-degenerate; this means that almost all instances of the above problems are easy.
In this section we would like to show that the requirement of fractional Pareto-optimality (fPO) is essential for this result: when this requirement is dropped, or even just relaxed to dPO, computational hardness strikes even for non-degenerate valuations, and it is no longer true that almost all instances are easy.

To prove this statement, we first have to formally define the notion of ``almost all valuations are easy''.

\subsection{Definitions}

\newcommand{\inputs}{\mathcal{I}({t,c})}
\newcommand{\inputsa}{\mathcal{I}({t_1,c_1})}
\newcommand{\inputsb}{\mathcal{I}({t_2,c_2})}
\newcommand{\inputsba}{\mathcal{I}({t_2(t_1),c_2})}
\newcommand{\pinputs}{\mathcal{I}_p({t,c})}
\newcommand{\pinputsa}{\mathcal{I}_p({t_1,c_1})}
\newcommand{\pinputsb}{\mathcal{I}_p({t_2,c_2})}
\newcommand{\pinputsba}{\mathcal{I}_p({t_2(t_1),c_2})}
\newcommand{\ball}{\mathcal{B}}

We consider a decision problem $P$, whose input is a vector of some $t$ non-negative integers.
As the input size is measured by $t$, the binary encoding length of the numbers should be polynomial in $t$. For simplicity, we assume that 
all input integers are in the range $[0, 2^{c t}]$, for some constant $c$ that may depend on the problem. 
We denote the set of possible inputs of size $t$ (that is, $t$-sized vectors of integers in $[0, 2^{c t}]$) by $\inputs$. 

For any input vector $x\in \inputs$, 
we denote its infinity norm by $\|x\|$. For any $x$ and integer $r$, we denote by $\ball(x,r)$ the \emph{ball of radius $r$} around $x$: 
$\ball(x,r) := \{x'\in \inputs: \|x-x'\|\leq r\}$.

The notion of ``almost all instances are easy'' is formalized by the following definition of ``generically polynomial-time algorithm''.



\begin{definition}
\label{def:generically-easy}
Given an algorithm $A$ for a problem $P$, 
we say that \emph{$A$ runs generically in polynomial time}
if there exists a polynomial function $f_p$ such that, for every size $t$ and input $x\in \inputs$,
there is a subset of ``Good inputs'' $G(x)\subseteq \ball(x,f_p(t))$, such that the following holds:

(a) Algorithm $A$ runs in time poly($t$) on all inputs in $G(x)$;

(b) The fraction of good inputs approaches 1, that is,
\begin{align*}
\lim_{t\to\infty}
\min_{x\in \inputs}\frac{|G(x)|}
{|\ball(x,f_p(t))|} = 1,
\end{align*}

(c) Given $x$, it is possible to compute in time poly($t$) a vector  in $G(x)$.
\end{definition}

The results in \citet{sandomirskiy2022efficient} imply:
\begin{proposition}
\label{prop:generically-polynomial}
For every fixed $n,s$, decision problems \propfpowithsharing{n,s}, \effpowithsharing{n,s}, \propfpowithshared{n,s}, \effpowithshared{n,s} have algorithms that run in generically-polynomial time.
\end{proposition}

\proof{Proof}
In these problems, the input is $x \equiv v = $ a valuation matrix, and the number of integers in the input is $t = m n = $ the number of values in the valuation matrix.
We choose the polynomial $f_p(t) := t^3$.
We define the set $G(v)$ as the subset of matrices in $\ball(v,t^3)$ that define non-degenerate valutions. We show that this set satisfies \Cref{def:generically-easy}.

(a) Is satisfied by the algorithms in \cite{sandomirskiy2022efficient}.

(b) For each input (valuation matrix) $v'\in \ball(v, t^3)$, for each value $v'_{i,o}$ in the matrix
there are some $r_{i,o}$ options, where $t^3+1 \leq r_{i,o} \leq 2 t^3 + 1$.%
\footnote{
$r_{i,o} = t^3+1$ when the original value $v_{i,o}$ is at the bottom or top end of the allowed interval, and
$r_{i,o} = 2 t^3 + 1$ when $v_{i,o}$ is at the middle of the allowed interval.
}
The inputs in $G(v)$ are the matrices $v'$ in which each value $v'_{i,o}$ (the value of agent $i$ to object $o$) satisfies inequalities of the form:
$v'_{i,o} / v'_{j,o} \neq v'_{i,p} / v'_{j,p}$ for other agents $j$ and objects $p$; the number of such inequalities is smaller than $m n = t$. 
Therefore, the number of options to choose $v'_{i,o}$ is at least $r_{i,o} - t$.
Overall,
\begin{align*}
\frac{|G(v)|}{|\ball(v, t^3)|} 
&\geq 
\frac
{\prod_{i,o} (r_{i,o}-t)}
{\prod_{i,o} (r_{i,o})}
\\
&
=
{\prod_{i,o} (1-t/r_{i,o})}
\geq
(1 - t/(t^3+1))^t,
\end{align*}
which approaches $1$ as $t\to\infty$;
this formalizes the claim that ``almost all inputs are non-degenerate''.

(c) To compute an input in $G(v)$, it is sufficient to check at most $t$ options for changing each coefficient $v_{i,o}$; this can be done using polynomial in $t$ time.
\Halmos \endproof

Our goal is to prove that some problems do \emph{not} have generically-polynomial-time algorithms. To this end, we first define a \emph{multi-reduction}. We present a simplified version first, and then the full version.

\begin{definition}[multi-reduction -- simplified]
\label{def:multireduction-simplified}
Given two decision problems $P_1$ and $P_2$, 
both defined on inputs in $\inputs$ for some $c\geq 1$,
a \emph{polynomial-time multi-reduction} from $P_1$ to $P_2$ 
is a family of functions, $h_t: \inputs \to \inputs$, 
which maps an input for $P_1$ to an input for $P_2$, and satisfies the following:

(a) $h_t$ runs in time poly$(t)$;

(b) There exists a super-polynomial function $f_e$ such that, for all $t$ and all $x_1\in \inputs$, when $x_2 := h_t(x_1)$, 
\begin{align*}
P_2(x_2') = P_1(x_1)
&&
\text{ for all }
x_2' \in \ball(x_2, f_e(t) ),
\end{align*}
\end{definition}
A multi-reduction is stronger than a usual reduction in that each input to $P_1$ is transformed simultaneously to an super-polynomially-large set of inputs to $P_2$, all of which have the same output.

\Cref{def:multireduction-simplified} is ``simplified'' since it assumes that the  size and encoding length of the inputs to $P_1$ is equal to the size and encoding length of the inputs to $P_2$. 
In fact, reductions often add some inputs or increase the encoding length.
We handle this technical issue by assuming that the input $x_1$ is in $\inputsa$ and the transformed input $x_2$ is in $\inputsb$, where $t_2$ depends polynomially on $t_1$ but they do not have to be equal.
We also allow the constants $c_1$ and $c_2$ to differ.
\begin{definition}[multi-reduction -- full]
\label{def:multireduction-full}
Given two decision problems, $P_1$ defined on inputs in $\inputsa$ for some constant $c_1\geq 1$ and $P_2$ defined on inputs in $\inputsb$ for some constant $c_2\geq 1$, 
a \emph{multi-reduction} from $P_1$ to $P_2$  consists of a polynomial-time-computable function
$t_2: \mathbb{N} \to \mathbb{N}$ and
a family of functions, $h_{t_1}: \inputsa \to \inputsba$, 
which maps an input for $P_1$ to an input for $P_2$, and satisfies the following:

(a) $h_{t_1}$ runs in time poly$(t_1)$;

(b) There exists a super-polynomial function $f_e$ such that, for all $t_1$ and all $x_1\in \inputsa$, when $x_2 := h_{t_1}(x_1)$, 
\begin{align*}
P_2(x_2') = P_1(x_1)
&&
\text{ for all }
x_2' \in \ball(x_2, f_e(t_2) ),
\end{align*}
\end{definition}

\begin{definition}
\label{def:generically-np-hard}
A decision problem $P_2$ is called \emph{generically NP hard} if there exists a multi-reduction from some NP-hard problem $P_1$ to $P_2$.
\end{definition}

We prove that the relation between ``generically-polynomial''  and ``generically-NP-hard'' is analogous to the relation between ``polynomial'' and ``NP-hard'':
\begin{proposition}
\label{thm:generically-easy-is-not-generically-hard}
If a problem 
is generically-NP-hard, then it does not have  a generically-polynomial-time algorithm unless \textsc{P=NP}.
\end{proposition}

\proof{Proof}
Suppose by contradiction that some decision problem $P_2$ simultaneously satisfies the following:

(a) There is a generically-polynomial-time algorithm $A_2$ for $P_2$; denote by $f_p$ the polynomial function in \Cref{def:generically-easy}.

(b) There is a multi-reduction $h_{t_1}$ from some NP-hard problem $P_1$ to  $P_2$; denote by $f_e$ the super-polynomial function in \Cref{def:generically-np-hard}.

Since $f_p$ is polynomial and $f_e$ is super-polynomial, there is some $t_0$ such that $f_e(t) > f_p(t)$ for all $t>t_0$.
We show a polynomial-time algorithm for solving $P_1$ on all inputs of size $t_1$ such that $t_2(t_1)>t_0$.

Given an input $x_1\in \inputsa$ to problem $P_1$, we
use the assumed multi-reduction $h_{t_1}$ to compute in polynomial time an input $x_2 \in \inputsb$ to problem $P_2$.

By \Cref{def:generically-easy}(a,c), it is possible to compute in time poly$(t_2)$
another input $x_2' \in \ball(x_2, f_p(t_2))$, such that $A_2$ runs on $x_2'$ in time poly$(t_2)$.

But by definition of multi-reduction (\Cref{def:multireduction-full}), 
$P_2(x_2'') = P_1(x_1)$ for all $x_2'' \in \ball(x_2, f_e(t_2))$.
In particular, since $f_e(t_2) > f_p(t_2)$, this holds for $x_2'$ too, so 
$P_2(x_2') = P_1(x_1)$.

Therefore, by running $A_2$ on $x_2'$, we get the correct answer to $P_1(x_1)$.
\Halmos \endproof

Now we are ready for our main results: showing that fair division problems without the fPO requirement are generically-NP-hard, and therefore probably do not have generically-polynomial-time algorithms (in particular, they are NP-hard even for non-degenerate valuations).

\subsection{EF and PROP allocations --- fixed $n$}

\begin{theorem}
\label{thm:hardness-fair-with-sharings}
The decision problems \propwithsharing{n,s} and \efwithsharing{n,s} 
are generically NP-hard
in the following cases:

(a) For any fixed $n \geq 2$ and $s=0$;

(b) For any fixed $n\geq 3$ and $s\range{0}{n-3}$.

\end{theorem}

\proof{Proof}
We show a multi-reduction from $P_1 = $ \textsc{$k$-way Partition}
to $P_2 = $ \fairwithsharing{n,s}.

The input to $P_1$ 
is a list $D := [d_1,\ldots,d_{t_1}]$ of integers\footnote{
In this section we use $D$  for the input to \textsc{$k$-way Partition} (instead of $x$), because $x$ is used for a generic input vector in $\inputs$.} in $[0,2^{c_1 t_1}]$, where $c_1\geq 1$ is some integer constant%
\footnote{
In fact, we can assume $c_1=1$.
The NP-hardness proof of \textsc{Partition} described in 
\citet{garey1979computers}
constructs instances with some $p$ integers, in which each input integer can be represented by at most $3 p \log(p)$ bits. We can add to each constructed instance $3 p \log(p) - p$ zeros; this does not affect the existence of a partition.
Then, taking $t_1 := 3 p \log(p)$, we have that each constructed instance has $t_1$ integers in $[0, 2^{t_1}]$.
So \textsc{Partition} is NP-hard even when restricted to $\inputs$ with $c=1$.
}.
The task is to decide if the integers can be partitioned into $k$ subsets with sum $S := \frac{1}{k} \sum_{o=1}^{t_1} d_o$.
We denote $L := 2^{c_1 t_1}$.

\textbf{For (a),} we reduce from \textsc{$k$-way Partition} for $k=n$.
We construct an instance of \fairwithsharing{n,0} with $m = t_1+n$ objects. Note that the number $t_2$ of inputs to $P_2$ is the number of values in the valuation matrix, $t_2 = m n = n t_1 + n^2$, which is polynomial in $t_1$ as $n$ is fixed.

All agents' valuations will be integers in the range 
$[0, 2^{c_2 t_2}]$, where $c_2 := 2 c_1$. 
For simplicity, we denote by ``$/$'' the integer division operation.

\begin{itemize}
\item The objects $o \in \{1,\ldots, t_1\}$ are \emph{usual objects}. 
Each agent $i\in[n]$ values each usual object $o$ at $L \cdot d_o + L/(16 m)$.
\item The objects $o \in \{t_1+1,\ldots, t_1+n\}$ are \emph{compensation objects}. 
Each agent $i\in[n]$ values the compensation object $t_1+i$ at 
$3 L / 8$,
and every other compensation object at 
$L / (16 m n)$.
\end{itemize}
Note that all valuations are integers smaller than $L^2 + L$, which is indeed smaller than $2^{c_2 t_2}$.

For the multi-reduction we will use the function
\begin{align*}
f_e(t_2) &:= L/(64 m n) = 2^{t_1 c_1 - 6} / (m n) 
\\
&
= 2^{c_1 (t_2 - n^2)/n} / t_2.
\end{align*}
Note that $f_e$ is indeed a super-polynomial function of $t_2$.

Suppose that there exists a partition of $D$ into $n$ subsets $(D_1,\ldots, D_n)$ with sums equal to $S$.
We construct an allocation of the objects by giving each agent $i$ the set of usual objects corresponding to the integers in $D_i$, as well as the compensation object indexed $t_1+i$.
We show that the resulting allocation is EF (and hence PROP) not only for the constructed valuation matrix, but also for any valuation matrix with infinity-distance of at most $L/(64 m n)$.

Indeed, in every such input, each agent values usual object $o$ at least $L\cdot d_o + L/(16 m) - L/(64 m n) > L\cdot d_o$,
and values his compensation object at least $3 L / 8 - L/(64 m n) > 2 L / 8$, 
so he values his own bundle at least $L S+ 2 L /8$.

On the other hand, each agent values usual object $o$ at most $L\cdot d_o + L/(16 m) + L/(64 m n) < L\cdot d_o + L/(8 m)$, and values any other compensation object at most $L/(16 m n) + L/(64 m n) < L/(8 m n)$.
Therefore, each agent values the bundle of any other agent at most $L S + m\cdot (L / 8 m) = L S + L/8$, so there is no envy.

Conversely, suppose there is a PROP allocation (or an EF allocation, which is always PROP). 
As each agent values each usual object $o$ at least $L d_o$, the sum of \emph{all} values for the agent is larger than $L\cdot n S$,
so PROP requires each agent to receive a value larger than $L S$.

The value of each potential bundle comes from integer multiples of $L$ (--- the $L\cdot d_o$ value of usual objects), and fractions of $L$ (--- the $L/(16 m)$ value of usual objects, and the values of compensation objects). The sum of all fractions is strictly less than $L$.
Therefore, to get a value of at least $L S$, each agent must receive usual objects with $\sum_o d_o \geq S$. The partition of usual objects corresponds to a partition of $D$ into $n$ subsets of sum at least $S$ each. But the total sum of all integers is $n S$, so the sums of all $n$ subsets must be exactly $S$.

\textbf{For (b),} we reduce from $k$-way Partition for $k=n-s-1$. Note that $k\geq 2$ since $s\leq n-3$.
We construct an instance of \fairwithsharing{n,s}
with $n$ agents, of whom $k$ are \emph{usual agents} constructed similarly to the $k$ agents in part (a), and the remaining $s+1$ are \emph{special agents}, indexed $k+1,\ldots, k+s+1$.
As in part (a), there are $t_1$ usual objects and $n$ compensation objects. In addition, there is a single \emph{special} object. 
All in all we have $n = k+s+1$ and $m=t_1+n+1$. The additional valuations are:
\begin{itemize}
\item Each usual agent values the special object at $(s+1) L S + L/16$
\item Each special agent $i$ values the special object at 
$n\cdot L S + L/16$,
and values each of the other objects (usual and compensation) at $L/(16mn)$.
\end{itemize}
We use the same super-polynomial function $f_e(t_2) = L/(64 m n)$ as in part (a).

Suppose there exists a partition of $D$ into $k$ subsets $(D_1,\ldots, D_k)$ with sums equal to $S$.
Similarly to part (a), we give each usual agent $i$ the set of usual objects corresponding to the integers in $D_i$, as well as the compensation object indexed $t_1+i$.
We divide the special object equally among the $s+1$ special agents, so that there are $s$ sharings.
We show that the resulting allocation is EF (and hence PROP) not only for the constructed valuation matrix, but also for any valuation matrix with infinity-distance of at most $L/(64 m n)$.

As in part (a), each usual agent $i\in\{1,\ldots, k\}$ values his own bundle at least $L S + 2 L / 8$, 
and values the bundle of every other usual agent at most $L S + L/8$.
Moreover, each usual agent values the special object at most $LS + L/16 + L/(64 m n) < LS + L/8$, so values the bundle of every special agent at most $LS + (L/8)/(s+1)$, so the usual agents do not envy.

Each special agent $i\in\{k+1,\ldots, k+s+1\}$ values the special object at least $nLS +L/16 - L/(64 m n) > nLS$, so values his own bundle at least $(n L S)/(s+1) > L S$.
Moreover, each special agent values every other object at most $L/(16 m n) + L/(64 m n) < L/(8 m n)$, so values every other bundle at most $L/8$. Therefore, the special agents do not envy too, so the allocation is EF and hence also PROP.

Conversely, suppose there is a PROP allocation (or an EF allocation, which is PROP) with at most $s$ sharings. 
Then:

- Each special agent must receive a value of at least $(n L S)/ n = L S$. Since their total value for all objects except the special object is at most $L/8$, each special agent must receive a part of the special object. This means that all $s$ sharings are in the special object, so the usual and compensation objects must be allocated without sharing.

- For each usual agent, the sum of all values is larger than $k L S + (s+1) L S = n L S$, so they must receive a value larger than $L S$ each.
As in part (a), the sum of all fractions in the values of usual and compensation objects is smaller than $L$. 
 Furthermore, no more sharings are allowed in the special object. So each usual agent must receive usual objects with $\sum_o L d_o \geq L S$, and this induces a $k$-partition of $D$ into $k$ subsets of sum exactly $S$.
\Halmos \endproof

We could not adapt the proof of \Cref{thm:hardness-fair-with-sharings} to \emph{shared objects}. We could potentially have $s$ special objects that have to be shared, but then some parts of the special objects could benefit the usual agents, which would undermine the claim in the last paragraph of that proof. 
We can still prove similar results for shared objects, but with a different multi-reduction.

\begin{theorem}
\label{thm:hardness-fair-with-shared}
The problems \propwithshared{n,s} and \efwithshared{n,s} 
are 
generically NP-hard
in the following cases:

(a) For any fixed $n \geq 2$ and $s=0$;

(b) For any fixed $n\geq 3$ and $s\range{0}{n-3}$.
\end{theorem}

\proof{Proof}
We prove both parts using a single multi-reduction from
the case of identical valuations, proved to be NP-hard in \cite{DBLP:journals/corr/abs-2204-11753}.

We are given a list $D$ of items%
\footnote{We use ``items'' for the original instance $D$, and ``objects'' for the constructed fair allocation instance.}
whose values are integers $[d_1,\ldots,d_{t_1}]$
in $[0, 2^{c_1 t_1}]$, where $c_1\geq 1$ is some constant,
with $\sum_{o=1}^{t_1} d_o = n S$.
We have to decide if they can be partitioned into $n$ bins with sum $S$ each, with at most $s$ items split between two or more bins.
Let $L := 2^{c_1 t_1}$.

We construct a fair allocation instance with $n$ agents and $m = t_1+n$ objects. Hence, $t_2 = m n = n t_1 + n^2$, which is linear in $t_1$ as $n$ is fixed.
All agents' valuations will be integers in $[0, 2^{c_2 t_2}]$, where $c_2 := 2 c_1$.
Recall that ``$/$'' denotes integer division.
\begin{itemize}
\item The objects $o \in \{1,\ldots, t_1\}$ are \emph{usual objects}. 
Each agent $i\in[n]$ values each usual object $o$ at $L \cdot d_o + L / (16 m n)$.
\item The objects $o \in \{t_1+1,\ldots, t_1+n\}$ are \emph{compensation objects}. 
Each agent $i\in[n]$ values the compensation object $t_1+i$ at 
$ 3 L / (8n)$
and every other compensation object at 
$L / (16 m n)$.
\end{itemize}
Note that all valuations are integers smaller than $L^2 + L$, which is indeed smaller than $2^{c_2 t_2}$.
For the multi-reduction we use the same  super-polynomial function as in \Cref{thm:hardness-fair-with-sharings}: 
$f_e(t_2) := L/(64 m n)$.

Suppose that there exists a partition of $D$ into $n$ subsets $(D_1,\ldots, D_n)$ with sums equal to $S$, with at most $s$ shared items.
We construct an allocation of the objects in our instance by giving each agent $i$ the set of usual objects  corresponding to the items in $D_i$ (including fractions), as well as the compensation object $t_1+i$.
Note that the number of shared items remains unchanged (at most $s$).

We now prove that the allocation is EF (and hence PROP) not only for the constructed instance, but also for any instance with infinity distance at most $L/(64 m n)$.
In all such instances, each agent values every usual object $o$ at least $L d_o + L/(16 m n) - L/(64 m n) > L d_o$  and his compensation object at least $3 L / (8 n) - L/(64 m n) > 2 L / (8 n)$, so values his own bundle at least $L S + 2 L/(8 n)$; and values the bundle of every other agent at most $S+L/(8n)$. Hence the allocation is EF and also PROP.

Conversely, suppose there is a PROP allocation (or an EF allocation, which is always PROP), with at most $s$ shared objects.
For each agent $i$, the sum of all object values is larger than $L \cdot n S$,
so PROP requires each agent to receive a value larger than $L S$. 

Denote by $D_i$ the set of items from $D$ (including fractions) corresponding to the usual objects given to agent $i$.

The value of each potential bundle comes from integer multiples of $L$ (--- the $L\cdot d_o$ value of usual objects), and fractions of $L$ (--- the $L/(16 m n)$ value of usual objects, and the values of compensation objects). The sum of all fractions is strictly less than $L/n$.
Therefore, to get a value of at least $L S$, each agent must receive usual objects with 
$\sum_{o\in D_i} L d_o > L S-L/n$,
which implies 
$\sum_{o\in D_i} d_o > S-1/n$ for each agent $i\in[n]$.
Since 
$\sum_{o\in D} = n S$,
this implies 
$\sum_{o\in D_j} d_o < S+1$ for each agent $j\in[n]$.
Therefore, each bundle $D_i$ with 
$\sum_{o\in D_i} d_o \neq S$
must have fractions of items (as all item values in $D$ are integers).
Therefore, we can move fractions of items from bundles $D_i$ with sum larger than $S$ to bundles $D_j$ with sum smaller than $S$, without increasing the number of shared items. 
By iteratively moving these fractions, we can construct in polynomial time a partition of $D$ with at most $s$ shared items, in which all bin sums equal $S$.
\Halmos \endproof

We could not adapt the proof of \Cref{thm:hardness-fair-with-shared} to sharings, since moving fractions of items between bins might increase the number of sharings. This is why we used a different proof for sharings in \Cref{thm:hardness-fair-with-sharings}.

The hardness of \Cref{thm:hardness-fair-with-sharings}(a)  and
\Cref{thm:hardness-fair-with-shared}(a) 
remains even if we add the requirement of \emph{discrete} PO (in contrast to fractional PO). We consider only part (a) as the requirement of discrete PO makes sense only for allocations without sharing.

\begin{theorem}
\label{thm:hardness-dpo-with-sharings}
For any fixed integer $n\geq 2$, 
the decision problems
\propdpowithsharing{n,0} ($\equiv$ \propdpowithshared{n,0})
and
\efdpowithsharing{n,0} ($\equiv$ \efdpowithshared{n,0})
are 
generically NP-hard.
\end{theorem}

\proof{Proof}
We adapt the proof 
of \Cref{thm:hardness-fair-with-sharings}(a).
It is sufficient to prove that, 
if there exists an equal-sum partition of $D$, then there exists a dPO+EF allocation (with no sharings) of the objects. 

Let $\mathbf{z}$ be the EF allocation constructed in the proof 
of \Cref{thm:hardness-fair-with-sharings}(a).
If $\mathbf{z}$ is also dPO then we are done. Otherwise, let $\mathbf{y}$ be a dPO allocation (with no sharing) that Pareto-dominates 
$\mathbf{z}$.
We claim that $\mathbf{y}$ is EF too.

Since $\mathbf{y}$ dominates $\mathbf{z}$, 
each agent must receive in $\mathbf{y}$ a value of at least $L S+2L/8 > LS$.
As in the proofs of the previous theorems, 
this means that each agent must receive some usual objects corresponding to integers from $D$ with a sum at least $S$. But since the sum of all integers in $D$ is $n S$, each agent must receive usual objects  corresponding to integers of sum exactly $S$. 
This means that each agent $i$ values the bundle of every other agent at most $L S+L/8$. 
Therefore, the allocation $\mathbf{y}$ is EF, so it is the desired dPO+EF allocation.
\Halmos \endproof

\Cref{thm:hardness-dpo-with-sharings} is interesting as it shows a crucial difference between the apparently-similar concepts fPO and dPO: whilst the fPO allocations can be enumerated in polynomial time (as their number is polynomial in $m$ when the valuations are non-degenerate \cite{sandomirskiy2022efficient}), the dPO allocations cannot.

When $s=n-1$, an EF and PROP allocation with $s$ sharings always exists (see \Cref{sec:bounds}). Therefore, only the case $s=n-2$ remains open.
With identical valuations, this case is NP-hard with sharings and polynomial with shared objects  \cite{DBLP:journals/corr/abs-2204-11753}. We do not know if the same is true with non-degenerate valuations.
\begin{open}
For any $n\geq 3$ and $s=n-2$, 
do the problems
\propwithsharing{n,s},
\efwithsharing{n,s},
\propwithshared{n,s},
\efwithshared{n,s}
have a generically-polynomial-time algorithm?
\end{open}

For equitability and consensus allocation, we currently have neither a generically-polynomial-time algorithm nor a proof of generic NP-hardness.

\begin{open}
(a) For any $n\geq 2$ and $s \in [0,n-1)$, 
do the problems
\eqwithsharing{n,s}, \eqwithshared{n,s} 
have a generically-polynomial-time algorithm?

(b) For any $n\geq 2$ and $s \in [0, n(n-1))$, 
do the problems
\conswithsharing{n,s}, \conswithshared{n,s}
have a generically-polynomial-time algorithm?
\end{open}

\subsection{EF and PROP allocations --- unbounded $n$}
So far we assumed that $n$ is fixed. If $n$ is unbounded (part of the input), then we can prove \emph{strong} NP-hardness for both sharings and shared objects.
This requires to adapt the notions of 
multi-reduction and generic NP-hardness to strong NP-hardness.
All definitions and proofs are in \Cref{sec:strong-generic-hardness}.

\section{Conclusion}

Our work presented several results related to  fairly allocating objects among agents, with a mixture of divisible and indivisible objects. In our work all objects are divisible, however, divisibility is highly discouraged. We covered many fairness and efficiency concepts, with a bound on the number of sharings or shared objects. 
We tackle the difficulty of finding an allocation for agents with arbitrary valuations by restricting the agent valuations to some well-studied domains, especially the \emph{binary}, \emph{generalized-binary}, and the \emph{non-degenerate} valuations. 
In addition, our work shows that sometimes a polynomial algorithm can be designed if we allow a bounded number of objects to be shared among agents. Such a behavior is already analyzed in \citet{DBLP:journals/corr/abs-2204-11753}, and our work can be seen as a confirmation --- when searching for a fair allocation is too hard, considering the objects as divisible, but still constraining the number of shared objects, might be reasonable for the agents, and may significantly decrease the runtime of the search. Thus, we highlighted a new way to relax the difficult fair-division problem, as it is common in the literature using for example Envy-freeness up to one good (EF1).

Our main result considers 3 agents under equal-sum generalized binary utilities. 
Despite the fact that we consider only 3 agents, the algorithm is complicated, and involves an exhaustive case analysis. That is why, as a next challenge, one can try to generalize our algorithm and see if it is possible to extend it for any fixed number $n$ of agents. Also, it is interesting to see if the same result holds for agents under general additive utilities.

We dedicated one section to non-degenerate valuations. Despite the surprising polynomial time algorithm designed in \citet{sandomirskiy2022efficient}, several results in our paper point to the hardness of this problem, even when allowing shared objects.

Along the paper, we left many open problems. This paper is an invitation for researchers interested in the fair-division field, 
to find new  results, 
with the goal to find theoretic and practical solutions, to numerous fair-division problems. 
\newpage

\begin{APPENDICES}
\SingleSpacedXII 

\section{Truthful Fair Division}

\label{sec:truthful}
A division algorithm is \emph{truthful} if for every agent $i$, the utility of $i$ is maximized when $i$ reports the true valuations  $(v_{i,o})_{o\in[m]}$.
In general, truthfulness, fairness, and Pareto-optimality are incompatible, see 
\citet{zhou1990conjecture}. However, truthfulness can be achieved by introducing some inefficiencies.
This section surveys several such truthful mechanisms and checks whether any of them can be adapted to minimize the sharing.

\subsection{Consensus allocation mechanism}
\citet{Mossel2010Truthful} presents a truthful randomized mechanism that uses a consensus allocation.
Given a consensus allocation $\mathbf{z}$,
a permutation $\pi$ over $[n]$
is selected uniformly at random,
and the bundle $\mathbf{z}_{\pi(i)}$ is allocated to agent $i$. Thus, the expected utility of any agent $i$, whether truthful or not, is $V_i/n$, so the agent cannot gain by false reporting. 
Moreover, a truthful agent gets a utility of exactly $V_i/n$ with certainty, while a non-truthful agent might get more or less than $V_i/n$. So for a risk-averse agent, truthfulness is a strictly dominant strategy.

Combining the algorithm of \citet{Mossel2010Truthful} with Theorem \ref{thm:upperbound-cons} gives: 
\begin{corollary}
There exists a randomized truthful algorithm that (with certainty) returns an envy-free allocation with  at most $n(n-1)$ sharings.
\end{corollary}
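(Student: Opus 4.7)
The plan is to compose the Mossel--Tamuz truthful-in-expectation mechanism directly with the existence result of Theorem~\ref{thm:consensus}. Given the reported valuations, the algorithm first invokes the LP-based construction of Theorem~\ref{thm:consensus} to obtain a consensus allocation $\mathbf{z}$ with $u_i(\mathbf{z}_j) = V_i/n$ for all $i,j\in[n]$ and $\shar{z}\leq n(n-1)$. It then samples a uniformly random permutation $\pi$ of $[n]$ and hands bundle $\mathbf{z}_{\pi(i)}$ to each agent $i$. Since this step only relabels the bundles of $\mathbf{z}$, the set of shared objects and the sharing count are unchanged, so the bound $\shar{\cdot}\leq n(n-1)$ holds with certainty on the delivered allocation.

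For truthfulness I would argue as in \citet{Mossel2010Truthful}: under the uniform permutation each agent receives every bundle with probability $1/n$, so agent $i$'s expected true utility equals $\tfrac{1}{n}\sum_{j=1}^n u_i(\mathbf{z}_j) = \tfrac{1}{n}u_i\bigl(\sum_j \mathbf{z}_j\bigr) = V_i/n$. Because the right-hand side depends only on $i$'s own true valuation and not on $i$'s report, no misreport can strictly increase $i$'s expected utility. Truthful reporting further gives utility exactly $V_i/n$ with certainty (the consensus property applied to the true valuation), so under risk aversion truthfulness is strictly dominant.

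For envy-freeness on every realization, truthful reporting makes the consensus identity $u_i(\mathbf{z}_j) = V_i/n$ hold for all $i,j$, so agent $i$ assigns the same value $V_i/n$ to every bundle of $\mathbf{z}$. This equality is invariant under relabeling by $\pi$, so whichever bundle agent $i$ ends up with has the same value as the bundle of any other agent in $i$'s eyes; no envy can arise, independently of the random draw.

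There is no substantive obstacle --- the corollary is essentially a packaging of the Mossel--Tamuz mechanism with a sharing-efficient consensus allocation. The one point that needs explicit mention is that the randomization step is purely a permutation of bundle labels and therefore preserves, rather than inflates, the sharing count guaranteed by Theorem~\ref{thm:consensus}; this is what allows the $n(n-1)$ bound to carry through deterministically to the output allocation rather than only in expectation.
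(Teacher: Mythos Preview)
Your proposal is correct and follows exactly the route the paper intends: the paper simply states that the corollary follows by ``combining the algorithm of \citet{Mossel2010Truthful} with Theorem~\ref{thm:consensus}'', and you have spelled out precisely that combination, including the observation that the random permutation only relabels bundles and hence preserves the sharing bound.
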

The $n(n-1)$ is tight for a consensus allocation, but not necessarily tight for truthfulness. Below we survey some other approaches for truthful fair allocation.

\subsection{Partial-allocation and strong-demand mechanisms}
\citet{cole2013mechanism} suggest a different approach to truthful fair division, called \emph{Partial Allocation  Mechanism (PAM)}.
Their benchmark for fairness is the \emph{max-product allocation} --- the allocation maximizing the product of utilities (also known as the \emph{Nash-optimal} or the \emph{Proportionally-fair} allocation).
Informally, PAM works as follows.
\begin{enumerate}
\item Find a max-product allocation $\mathbf{z^*}$.
\item For each agent $i$, compute the ratio $f_i$ between the product of the other agents' utilities when $i$ is present, and the maximum product of their utilities when $i$ is not present. 
\item Give to each agent $i$ the bundle 
$f_i\cdot \mathbf{z^*_i}$ (that is,  a fraction $f_i$ of each good $i$ receives in the original max-product allocation).
\end{enumerate}
They prove that, under reasonable assumptions, $f_i \geq 1/e \approx 0.368$, so each agent is guaranteed at least $36.8\%$ of his/her max-product utility.  Moreover, they prove that, with additive linear utilities, the allocation is envy-free.
It remains to analyze how many sharings this allocation requires.

It is well-known that, when all valuations are linear and additive as in our case, the max-product allocation is equivalent to the Competitive Equilibrium from Equal Incomes (CEEI) allocation which is a well-known rule for fair allocation of resources among agents with different preferences.
As explained in Section \ref{sec:bounds}, there always exists a CEEI allocation with at most $n-1$ sharings. However, 
the process of giving each agent $i$ only a fraction $f_i$ of each object is, arguably, equivalent to forcing the agent to share each and every object (e.g. with the mechanism designer, the government or the public).

A possible way to overcome this issue is to interpret $f_i$ not as a fraction but as a probability, that is: with probability $f_i$, agent $i$ receives his/her share in the max-product allocation, otherwise agent $i$ receives nothing. Since the utilities of all agents (assuming they are risk-neutral) are the same in both cases, the randomized mechanism is still truthful and has at most $n-1$ sharings. However, in contrast to the randomized mechanism based on consensus allocation, it does not guarantee proportionality ex-post, nor even ex-ante (since $1/e$ of the max-product allocation may be less than $1/n$ of the total value). \citet{cole2013mechanism} presents a different mechanism called Strong Demand Mechanism (SDM), which is particularly efficient when there are many agents and few goods (as in the case of coupon-based privatization). SDM gives to each agent a fraction of a single good, so all objects are shared. Hence, SDM is apparently incompatible with sharing minimization.

\begin{remark}
For the case of two agents, \citet{cole2013positive} present a mechanism called MAX, which runs both the consensus-allocation mechanism and the partial-allocation mechanism and returns the outcome with the highest social welfare. They prove that the resulting mechanism is truthful, envy-free, proportional,  and guarantees at least $2/3$ of the maximum social welfare. 
Moreover, as explained above, for two agents the consensus-allocation mechanism can be implemented with at most 2 sharings and the partial-allocation mechanism can be implemented with at most 1 sharing.
\end{remark}

\subsection{Mechanism for binary valuations}
In the problem of fair cake-cutting, most positive results for truthful allocation are designed for the special case in which the agents have \emph{piecewise-uniform valuations}.
the cake is assumed to be an interval;
each agent $i$ desires a finite set of sub-intervals of the cake, and does not care about the rest of the cake.
For this setting, \citet{Chen2013Truth} presents a truthful mechanism, called the CLPP mechanism, that guarantees envy-freeness and Pareto-optimality. Informally, the CLPP mechanism proceeds as follows.
\begin{enumerate}
\item For each subset $S$ of agents, calculate their \emph{average size} --- the total size of intervals desired by at least one member of $S$.
\item Pick an $S$ with a smallest average size (breaking ties arbitrarily); 
\item \label{step:allocate} Allocate to the members of $S$, all their desired sub-intervals, such that each member of $S$ gets a value of exactly the average size (this means that all the desired sub-intervals are allocated to agents who desire them).
\item Divide the remaining cake recursively among the remaining agents.
\item Discard all parts of the cake that are undesired by any agent.
\end{enumerate}
In our setting, piecewise-uniform valuations means that each agent assigns to each object a value of either $1$ or $0$. 
Under this assumption, an agent cares only about the total amount he/she gets from desired objects. Therefore, in step \ref{step:allocate}, the allocation can be implemented using at most $|S|-1$ sharings. When $S$ is the set of all agents, the allocation requires $n-1$ sharings, which is the worst-case upper bound on any fair division.

\begin{corollary}
For binary valuations, 
there exists a deterministic truthful algorithm that (with certainty) returns an envy-free and Pareto-optimal allocation with  at most $n-1$ sharings.
\end{corollary}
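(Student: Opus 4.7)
The plan is to invoke the CLPP mechanism of \citet{Chen2013Truth} directly, and to obtain the sharing bound by an LP basic-feasible-solution argument in the spirit of Theorem \ref{thm:upperbound}. Truthfulness, envy-freeness and Pareto-optimality for piecewise-uniform (binary) valuations are already established by \citet{Chen2013Truth}, so the only new content is the bound of $n-1$ sharings.

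Recall that CLPP proceeds in a finite sequence of rounds; in round $j$ a subset $S_j$ of the remaining agents with the minimum ``average length'' $\lambda_j$ is chosen, and the objects desired by at least one member of $S_j$ are distributed among the members of $S_j$ so that every member receives value exactly $\lambda_j$ from only his desired objects. The selected sets $S_1,\ldots,S_k$ partition $[n]$, so $\sum_{j=1}^{k}|S_j|=n$, and the allocation produced by round $j$ is disjoint from those of the other rounds; hence the total number of sharings is simply the sum, over rounds, of the sharings produced within each round.

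The key step is to bound by $|S_j|-1$ the number of sharings created in round $j$. Let $O_j$ be the set of objects desired by at least one member of $S_j$. I would write an LP with variables $z_{io}$ for $i\in S_j$ and $o\in O_j$, constraints $z_{io}=0$ whenever $i$ does not desire $o$, feasibility $\sum_{i\in S_j}z_{io}=1$ for each $o\in O_j$, value $\sum_{o\in O_j}v_{io}z_{io}=\lambda_j$ for each $i\in S_j$, and non-negativity. Feasibility of this LP follows from the Hall-type condition used to prove CLPP correct. Because the valuations are binary and $\lambda_j=|O_j|/|S_j|$, summing all $|S_j|$ value constraints and summing all $|O_j|$ feasibility constraints both yield $|O_j|$, so exactly one value constraint is redundant given the others together with the feasibility and desire constraints. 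After dropping that constraint the LP has $|O_j|+|S_j|-1$ independent equalities, and therefore admits a basic feasible solution with at most $|O_j|+|S_j|-1$ positive variables; by the definition of $\shar{z}$ this gives at most $|S_j|-1$ sharings in round $j$.

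Summing over rounds yields $\sum_{j=1}^{k}(|S_j|-1)=n-k\leq n-1$, as required, and all fairness/truthfulness properties are inherited from CLPP. The main obstacle I anticipate is making the redundancy argument above completely rigorous: one has to be careful that, after imposing the ``desired-only'' constraints $z_{io}=0$, the remaining equalities are exactly as described and that the basic feasible solution of the reduced LP respects them. This ultimately rests on the minimality of $S_j$ in CLPP, which both guarantees feasibility and lets us pass freely between the desire-constrained LP and its reduced form without losing the sharing count.
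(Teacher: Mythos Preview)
Your proposal is correct and follows the same route as the paper: invoke CLPP, observe that the allocation in each round $j$ can be realized with at most $|S_j|-1$ sharings, and sum over rounds to get at most $n-1$. The paper justifies the per-round bound with a one-line remark (since agents with binary valuations care only about the total amount received from desired objects, step~\ref{step:allocate} can be implemented with $|S|-1$ sharings), whereas you supply an explicit LP/BFS argument in the style of Theorem~\ref{thm:upperbound}; both reach the same conclusion.
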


\begin{open}
(a)
In some situations, it may be possible to attain the same allocation as the CLPP mechanism with less than $n-1$ sharings. Is it possible to minimize the amount of sharing, while keeping the mechanism truthful, envy-free and Pareto-optimal?

(b) What is the smallest number of sharings required for a truthful mechanism for finding proportional / envy-free / equitable allocation?
\end{open}


\section{Generalized Binary Utilities -- Cases Details}\label{app:g-binary}

In this section, we handle the 3 cases to prove \Cref{thm:g-binary}.
We consider three cases:
\begin{itemize}
    \item $\sumgb{1} \leq \sumgb{4}$ --- there is always a PROP allocation (\ref{sec:1-smaller-4});
    \item $\sumgb{4} < \sumgb{1} \leq \frac{2}{3} \cdot V$ --- there is always a PROP allocation  (\ref{sec:4-smaller-1});
    \item $\sumgb{1} > \frac{2}{3} \cdot V$ --- there is \emph{not} always a PROP allocation (\ref{sec:1-big}).
\end{itemize}

\end{APPENDICES}

\begin{alphasection}

\subsubsection{$\sumgb{1} \leq \sumgb{4}$, there is always a PROP allocation} \label{sec:1-smaller-4}
~
\begin{table}[H]
    \centering
    \begin{tabular}{c || c || c || c }
       ${\cal X}_j$ & $\sum_{x_i \in {\cal X}_j} u_1(x_i)$ & $\sum_{x_i \in {\cal X}_j} u_2(x_i)$ & $\sum_{x_i \in {\cal X}_j} u_3(x_i)$ \\
       \hline
       \hline
       ${\cal X}_1$  & $\sumgb{1}$ & $\sumgb{1}$ & $\sumgb{1}*$\\
       \hline
       ${\cal X}_2$  & $\sumgb{2}$ & $\sumgb{2}*$ & 0 \\
       \hline
       ${\cal X}_3$  & $\sumgb{3}*$ & 0 & $\sumgb{3}$\\
       \hline
       ${\cal X}_4$  & 0 & $\sumgb{4} \longrightarrow \frac{\sumgb{1} + \sumgb{4}}{3}*$ & $\sumgb{4} \longrightarrow \frac{2 \cdot \sumgb{4} - \sumgb{1}}{3}*$\\
       \hline
       ${\cal X}_5$  & $\sumgb{5}*$ & 0 & 0\\
       \hline
       ${\cal X}_6$  & 0 & $\sumgb{6}*$ & 0\\
       \hline
       ${\cal X}_7$  & 0 & 0 & $\sumgb{7}*$\\
       \hline
       bundle sum  & $\sumgb{3}+\sumgb{5}$ & $\sumgb{2}+\frac{\sumgb{1} + \sumgb{4}}{3}+\sumgb{6}$ & $\sumgb{1}+ \frac{2 \cdot \sumgb{4} - \sumgb{1}}{3}+\sumgb{7}$\\
    \end{tabular}
    \label{tab:1}
\end{table}

We assign the objects as shown in the table, such that there is at most one shared object in ${\cal X}_4$.
The intuitive meaning of this case is that a set of objects is worth a lot to agents 2 and 3, but worth little to agent 1. We can allocate these objects among agents 2 and 3 in any way we like with a single shared object, and use the other objects (which are less valuable) to compensate agent 1 without additional splits. Recall that by assumption $\sumgb{2} \leq \sumgb{3} \leq \sumgb{4}$ and $\sumgb{1} \leq \sumgb{4}$. Formally, the allocation is PROP since,

\begin{itemize}
    \item $\sumgb{3}+\sumgb{5} = \sumgb{4} + \sumgb{6} \geq \frac{\sumgb{4} + \sumgb{4} + \sumgb{4} + \sumgb{6}}{3} \geq \frac{\sumgb{1} + \sumgb{2} + \sumgb{4} + \sumgb{6}}{3} = \frac{V}{3}$;
    \item $\sumgb{2}+\frac{\sumgb{1} + \sumgb{4}}{3}+\sumgb{6}  \geq  \frac{\sumgb{1}+\sumgb{2}+\sumgb{4}+\sumgb{6}}{3} = \frac{V}{3}$;
    \item $\sumgb{1} + \frac{2 \cdot \sumgb{4} - \sumgb{1}}{3}+\sumgb{7} \geq \frac{\sumgb{1}+\sumgb{4}+\sumgb{4}+\sumgb{7}}{3} \geq \frac{\sumgb{1}+\sumgb{3}+\sumgb{4}+\sumgb{7}}{3} = \frac{V}{3}$.
\end{itemize}

\subsubsection{$\sumgb{4} < \sumgb{1} \leq \frac{2}{3} \cdot V$, there is always a PROP allocation} \label{sec:4-smaller-1}
~
\begin{table}[H]
    \centering
    \begin{tabular}{c || c || c || c }
       ${\cal X}_j$ & $\sum_{x_i \in {\cal X}_j} u_1(x_i)$ & $\sum_{x_i \in {\cal X}_j} u_2(x_i)$ & $\sum_{x_i \in {\cal X}_j} u_3(x_i)$ \\
       \hline
       \hline
       ${\cal X}_1$  & $\sumgb{1}$ & $\sumgb{1} \longrightarrow \sumgb{1}'*$ & $\sumgb{1} \longrightarrow \sumgb{1}''*$\\
       \hline
       ${\cal X}_2$  & $\sumgb{2}*$ & $\sumgb{2}$ & 0 \\
       \hline
       ${\cal X}_3$  & $\sumgb{3}*$ & 0 & $\sumgb{3}$\\
       \hline
       ${\cal X}_4$  & 0 & $\sumgb{4} \longrightarrow \sumgb{4}'*$ & $\sumgb{4} \longrightarrow  \sumgb{4}''*$\\
       \hline
       ${\cal X}_5$  & $\sumgb{5}*$ & 0 & 0\\
       \hline
       ${\cal X}_6$  & 0 & $\sumgb{6}*$ & 0\\
       \hline
       ${\cal X}_7$  & 0 & 0 & $\sumgb{7}*$\\
       \hline
       bundle sum  & $\sumgb{2}+\sumgb{3}+\sumgb{5}$ & $\sumgb{1}'+\sumgb{4}'+\sumgb{6}$ & $\sumgb{1}''+\sumgb{4}''+\sumgb{7}$\\
    \end{tabular}
    \label{tab:2}
\end{table}

We assign the objects as shown in the table, such that there is at most one shared object in ${\cal X}_1 \cup {\cal X}_4$. We combine ${\cal X}_1$ and ${\cal X}_4$, and cut at half such that $\sumgb{1}'+\sumgb{4}'=\sumgb{1}''+\sumgb{4}''=\frac{\sumgb{1}+\sumgb{4}}{2}$.
The intuitive meaning of this case is that there is a set of objects that are worth a lot to all three agents, but the sum of these objects is at most $2/3$ of the total value $V$. Therefore, it is enough to allocate this set among two of the three agents (giving each of them at least $V/3$), and use the other objects to compensate the third agent.
Recall that by assumption $\sumgb{2} \leq \sumgb{3} \leq \sumgb{4} < \sumgb{1}$, and, $\sumgb{1} \leq \frac{2}{3} \cdot V$. Formally, the allocation is PROP since,
\begin{itemize}
    \item $\sumgb{2}+\sumgb{3}+\sumgb{5} = \sumgb{1} + \sumgb{2} + \sumgb{3} + \sumgb{5} - \sumgb{1}= V - \sumgb{1} \geq V - \frac{2}{3} \cdot V = \frac{V}{3}$;
    \item $\sumgb{1}'+\sumgb{4}'+\sumgb{6} = \frac{\sumgb{1}+\sumgb{4}}{2} + \sumgb{6} = \frac{\sumgb{1}+\sumgb{4}}{3} +  \frac{\sumgb{1}+\sumgb{4}}{6} + \sumgb{6}  \geq \frac{\sumgb{1}+\sumgb{4} + \sumgb{6}}{3} +  \frac{\sumgb{2}+\sumgb{2}}{6}  \geq  \frac{\sumgb{1}+\sumgb{2}+\sumgb{4}+\sumgb{6}}{3} = \frac{V}{3}$;
    \item $\sumgb{1}''+\sumgb{4}''+\sumgb{7} = \frac{\sumgb{1}+\sumgb{4}}{2} + \sumgb{7} = \frac{\sumgb{1}+\sumgb{4}}{3} +  \frac{\sumgb{1}+\sumgb{4}}{6} + \sumgb{7}  \geq \frac{\sumgb{1}+\sumgb{4} + \sumgb{7}}{3} +  \frac{\sumgb{3}+\sumgb{3}}{6}  \geq  \frac{\sumgb{1}+\sumgb{3}+\sumgb{4}+\sumgb{7}}{3} = \frac{V}{3}$.
\end{itemize}

\subsubsection{$\sumgb{1} > \frac{2}{3} \cdot V$} \label{sec:1-big}

The intuitive meaning of this case is that there is a set of objects that are worth a lot to all three agents, and their sum is larger than $2/3$ the total value. Therefore, we must partition these objects (objects in ${\cal X}_1$) among all three agents, using \maxminnRwayidenticalsplitpar{3,1}.
We run \maxminnRwayidenticalsplitpar{3,1} on objects in ${\cal X}_1$, and we denote the resulting bundles sum by $b_1, b_2, b_3$, where without loss of generality, $b_3$ is the largest bundle sum.
By \Cref{lem:maxminstructure}, $b_1=b_2\leq b_3$.

We consider three cases,
\begin{itemize}    
    \item $b_3 \leq \frac{\sumgb{1}+\sumgb{3}+\sumgb{4}-2 \cdot \sumgb{7}}{3}$;
    \item $\frac{\sumgb{1}+\sumgb{3}+\sumgb{4}-2 \cdot \sumgb{7}}{3} < b_3 \leq \frac{\sumgb{1}+\sumgb{2}+2 \cdot (\sumgb{3}+\sumgb{4}+\sumgb{5}+\sumgb{6})}{3}$;
    \item $b_3 > \frac{\sumgb{1}+\sumgb{2}+2 \cdot (\sumgb{3}+\sumgb{4}+\sumgb{5}+\sumgb{6})}{3}$.
\end{itemize}

\paragraph{$b_3 \leq \frac{\sumgb{1}+\sumgb{3}+\sumgb{4}-2 \cdot \sumgb{7}}{3}$, there is always a PROP allocation}
\begin{table}[H]
    \centering
    \begin{tabular}{c || c || c || c }
       ${\cal X}_j$ & $\sum_{x_i \in {\cal X}_j} u_1(x_i)$ & $\sum_{x_i \in {\cal X}_j} u_2(x_i)$ & $\sum_{x_i \in {\cal X}_j} u_3(x_i)$ \\
       \hline
       \hline
       ${\cal X}_1$  & $\sumgb{1} \longrightarrow \sumgb{1}'*$ & $\sumgb{1} \longrightarrow \sumgb{1}''*$ & $\sumgb{1} \longrightarrow \sumgb{1}'''*$\\
       \hline
       ${\cal X}_2$  & $\sumgb{2}$ & $\sumgb{2}*$ & 0 \\
       \hline
       ${\cal X}_3$  & $\sumgb{3}*$ & 0 & $\sumgb{3}$\\
       \hline
       ${\cal X}_4$  & 0 & $\sumgb{4}$ & $\sumgb{4}*$\\
       \hline
       ${\cal X}_5$  & $\sumgb{5}*$ & 0 & 0\\
       \hline
       ${\cal X}_6$  & 0 & $\sumgb{6}*$ & 0\\
       \hline
       ${\cal X}_7$  & 0 & 0 & $\sumgb{7}*$\\
       \hline
       bundle sum  & $\sumgb{1}' + \sumgb{3}+\sumgb{5}$ & $\sumgb{1}''+\sumgb{2}+\sumgb{6}$ & $\sumgb{1}'''+\sumgb{4}+\sumgb{7}$\\
    \end{tabular}
    \label{tab:3}
    \end{table}
We assign the objects as shown in the table. The intuitive meaning of this is case is that the partition of $\sumgb{1}$ is quite balanced, so we can make it PROP by allocating the other objects.

Denote the value of the most valuable object in $\sumgb{1}$ by $3M$.
\subparagraph{$b_1 = b_2 = b_3$ and $\sumgb{4} < 3M$}

By \Cref{lem:largestItems}, the shared object is the most valuable one (with value $3M$), by the pigeonhole principle, there is at least one bundle in which the part of the shared object is at least $M$. Give this bundle to agent 3.

We have $\frac{\sumgb{4}}{3} < M$, so, starting from the perfect allocation, we can move $\frac{\sumgb{4}}{3}$ from the shared object from agent 3 to agent 2. Therefore, $\sumgb{1}'=\frac{\sumgb{1}}{3}$, $\sumgb{1}''=\frac{\sumgb{1}+\sumgb{4}}{3}$ and $\sumgb{1}'''=\frac{\sumgb{1}-\sumgb{4}}{3}$.

The allocation is PROP since,
\begin{itemize}
    \item $\sumgb{1}' + \sumgb{3}+\sumgb{5} = \frac{\sumgb{1}}{3} + \sumgb{3}+\sumgb{5} \geq \frac{\sumgb{1} + \sumgb{2} + \sumgb{3} + \sumgb{5}}{3} = \frac{V}{3}$;
    \item $\sumgb{1}''+\sumgb{2}+\sumgb{6} = \frac{\sumgb{1}+\sumgb{4}}{3}+\sumgb{2}+\sumgb{6} \geq \frac{\sumgb{1} + \sumgb{2} + \sumgb{4} + \sumgb{6}}{3} = \frac{V}{3}$;
    \item $\sumgb{1}'''+\sumgb{4}+\sumgb{7} = \frac{\sumgb{1}-\sumgb{4}}{3}+\sumgb{4}+\sumgb{7} \geq \frac{\sumgb{1}+\sumgb{3}+\sumgb{4}+\sumgb{7}}{3} = \frac{V}{3}$.
\end{itemize}

\subparagraph{$b_1 = b_2 = b_3$ and $\sumgb{4} \geq 3M$}\label{sec:big_V4}

Since the largest object in $\sumgb{1}$ is equal to $3M$, we define $\sumgb{1}'''$ as follows. Cut from ${\cal X}_1$ a subset with value $\frac{\sumgb{1}+\sumgb{3}- 2 \cdot (\sumgb{4}+\sumgb{7})}{3}$. The cut is feasible since:
\begin{align}\label{eq:V_1}
    \notag
    \sumgb{1} > \frac{2}{3} \cdot V & \iff 3 \cdot \sumgb{1} > 2 \cdot V \\
    \notag
     & \iff 3 \cdot \sumgb{1} > 2 \cdot (\sumgb{1}+\sumgb{3}+\sumgb{4}+\sumgb{7}) \\
    & \iff \sumgb{1} > 2 \cdot (\sumgb{3}+\sumgb{4}+\sumgb{7});
\end{align}
so $0 < \frac{\sumgb{1}+\sumgb{3}- 2 \cdot (\sumgb{4}+\sumgb{7})}{3} \leq \sumgb{1}$. We add the shared object (if any) to this subset. Now there is no shared object in $\sumgb{1}'''$, and 
\begin{align*}
     & \frac{\sumgb{1}+\sumgb{3}- 2 \cdot   (\sumgb{4}+\sumgb{7})}{3} 
    \leq \sumgb{1}''' 
    \\ & \hspace{3cm} \leq \frac{\sumgb{1}+\sumgb{3}- 2 \cdot (\sumgb{4}+\sumgb{7})}{3} + 3M 
    \\ & \hspace{3cm} \leq \frac{\sumgb{1}+\sumgb{3}- 2 \cdot (\sumgb{4}+\sumgb{7})}{3} + \sumgb{4} 
    \\ & \hspace{3cm} =\frac{\sumgb{1}+\sumgb{3}+\sumgb{4}- 2 \cdot \sumgb{7}}{3}.
\end{align*}
The sum of the remaining objects is at least:
\begin{align*}
     \sumgb{1} - \sumgb{1}''' & \geq \sumgb{1} - \frac{\sumgb{1}+\sumgb{3}+\sumgb{4}- 2 \cdot \sumgb{7}}{3} 
     \\ & = \frac{2 \cdot (\sumgb{1} + \sumgb{7}) - \sumgb{3}-\sumgb{4}}{3}.
\end{align*}
Since there is no shared object in $\sumgb{1}'''$ we can cut the remaining objects and set $\sumgb{1}'' = \frac{\sumgb{1} + \sumgb{4} - 2 \cdot (\sumgb{3}+\sumgb{7})}{3}$.
The cut is feasible since:
\begin{align*}
    \sumgb{1} - \sumgb{1}''' 
    & \geq \frac{2 \cdot (\sumgb{1} + \sumgb{7}) - \sumgb{3}-\sumgb{4}}{3} 
    \\ & \geq \frac{\sumgb{1} + \sumgb{1} - \sumgb{3}-\sumgb{4}}{3} 
    \\ &
    \stackrel{\eqref{eq:V_1}}{>} 
    \frac{\sumgb{1} 
    + 2 \cdot (\sumgb{3}+\sumgb{4}+\sumgb{7}) - 
    \sumgb{3}-\sumgb{4}}{3} 
    \\ & \geq \frac{\sumgb{1} + \sumgb{4} - 2 \cdot (\sumgb{3}+\sumgb{7})}{3};
\end{align*}
so $0 < \frac{\sumgb{1} + \sumgb{4} - 2 \cdot (\sumgb{3}+\sumgb{7})}{3} \leq \sumgb{1} - \sumgb{1}'''$.
Also, $\sumgb{1}' \geq \sumgb{1} - \sumgb{1}''' - \sumgb{1}''$, which is at least:
\begin{align*}
    \sumgb{1}' & \geq \frac{2 \cdot (\sumgb{1} + \sumgb{7}) - \sumgb{3}-\sumgb{4}}{3} 
    \\ & \hspace{3 cm} 
    - \frac{\sumgb{1} + \sumgb{4} - 2 \cdot (\sumgb{3}+\sumgb{7})}{3} 
    \\ & = \frac{\sumgb{1}+\sumgb{3}-2\cdot \sumgb{4} + 4 \cdot \sumgb{7}}{3}.
\end{align*}
The allocation is PROP since,
\begin{itemize}
    \item $\sumgb{1}' + \sumgb{3}+\sumgb{5} \geq \frac{\sumgb{1}+\sumgb{3}-2\cdot \sumgb{4} + 4 \cdot \sumgb{7}}{3} + \sumgb{3}+\sumgb{5} 
    \stackrel{(\sumgb{3} + \sumgb{5} = \sumgb{4} + \sumgb{6})}{\geq} 
    \frac{\sumgb{1}+\sumgb{3}-2\cdot \sumgb{4} + \sumgb{7}}{3} + \sumgb{4} \geq \frac{\sumgb{1} + \sumgb{3} + \sumgb{4} + \sumgb{7}}{3} = \frac{V}{3}$;
    \item $\sumgb{1}''+\sumgb{2}+\sumgb{6} \geq \frac{\sumgb{1} + \sumgb{4} - 2 \cdot (\sumgb{2}+\sumgb{6})}{3}+\sumgb{2}+\sumgb{6} \geq \frac{\sumgb{1} + \sumgb{2} + \sumgb{4} + \sumgb{6}}{3} = \frac{V}{3}$;
    \item $\sumgb{1}'''+\sumgb{4}+\sumgb{7} \geq \frac{\sumgb{1}+\sumgb{3}- 2 \cdot (\sumgb{4}+\sumgb{7})}{3} + \sumgb{4} + \sumgb{7} \geq \frac{\sumgb{1}+\sumgb{3}+\sumgb{4}+\sumgb{7}}{3} = \frac{V}{3}$.
\end{itemize}

\subparagraph{$b_1 = b_2 < b_3$}

Set $\sumgb{1}'''=b_3$.
By assumption, $\frac{\sumgb{1}}{3} < b_3 = \sumgb{1}''' \leq \frac{\sumgb{1}+\sumgb{3}+\sumgb{4}-2 \cdot \sumgb{7}}{3}$, and by \Cref{lem:maxminstructure}, there is no shared object in $\sumgb{1}'''$. So, we reach the exact same structure from \ref{sec:big_V4}. Using the same arguments the allocation is PROP.

\paragraph{$\frac{\sumgb{1}+\sumgb{3}+\sumgb{4}-2 \cdot \sumgb{7}}{3} \leq b_3 \leq \frac{\sumgb{1}+\sumgb{2}+2 \cdot (\sumgb{3}+\sumgb{4}+\sumgb{5}+\sumgb{6})}{3}$, there is always a PROP allocation}
\begin{table}[H]
    \centering
    \begin{tabular}{c || c || c || c }
       ${\cal X}_j$ & $\sum_{x_i \in {\cal X}_j} u_1(x_i)$ & $\sum_{x_i \in {\cal X}_j} u_2(x_i)$ & $\sum_{x_i \in {\cal X}_j} u_3(x_i)$ \\
       \hline
       \hline
       ${\cal X}_1$  & $\sumgb{1} \longrightarrow \sumgb{1}'*$ & $\sumgb{1} \longrightarrow  \sumgb{1}''*$ & $\sumgb{1} \longrightarrow  \sumgb{1}'''*$\\
       \hline
       ${\cal X}_2$  & $\sumgb{2} \longrightarrow \sumgb{2}'*$ & $\sumgb{2} \longrightarrow \sumgb{2}''*$ & 0 \\
       \hline
       ${\cal X}_3$  & $\sumgb{3}*$ & 0 & $\sumgb{3}$\\
       \hline
       ${\cal X}_4$  & 0 & $\sumgb{4}*$ & $\sumgb{4}$\\
       \hline
       ${\cal X}_5$  & $\sumgb{5}*$ & 0 & 0\\
       \hline
       ${\cal X}_6$  & 0 & $\sumgb{6}*$ & 0\\
       \hline
       ${\cal X}_7$  & 0 & 0 & $\sumgb{7}*$\\
       \hline
       bundle sum  & $  \sumgb{1}' + \sumgb{2}' + \sumgb{3}+\sumgb{5}$ & $ \sumgb{1}''+\sumgb{2}'' + \sumgb{4}+\sumgb{6}$ & $ \sumgb{1}'''+\sumgb{7}$ \\
    \end{tabular}
    \label{tab:4}
\end{table}
We assign the objects as shown in the table. This case is an intermediate case, where the partition of $\sumgb{1}$ is not balanced, but the maximum is still not very high. Then, we can satisfy one of the agents by giving him the largest part of $\sumgb{1}$, and compensate the other agents using the remaining objects.

Set $\sumgb{1}'''=b_3$.
By assumption, $b_3=\sumgb{1}''' \leq \frac{\sumgb{1}+\sumgb{2}+2 \cdot (\sumgb{3}+\sumgb{4}+\sumgb{5}+\sumgb{6})}{3}$, so the sum of the remaining objects is at least:
\begin{align*}
    \sumgb{1}'&+\sumgb{1}''+\sumgb{2}'+\sumgb{2}'' = \sumgb{1}-\sumgb{1}'''+\sumgb{2} 
    \\ & \geq  \sumgb{1}-\frac{\sumgb{1}+\sumgb{2}+2 \cdot (\sumgb{3}+\sumgb{4}+\sumgb{5}+\sumgb{6})}{3} +\sumgb{2} 
    \\ & = \frac{2\cdot(\sumgb{1}+\sumgb{2})-2 \cdot (\sumgb{3}+\sumgb{4}+\sumgb{5}+\sumgb{6})}{3}.
\end{align*}
By \Cref{lem:maxminstructure}, there is no shared object in $\sumgb{1}'''$, so we can cut the remaining objects and set 
$\sumgb{1}'+\sumgb{2}'=\frac{\sumgb{1}+\sumgb{2}-2 \cdot (\sumgb{3}+\sumgb{5})}{3}$, and the remainder $\sumgb{1}''+\sumgb{2}'' \geq \frac{\sumgb{1}+\sumgb{2}-2 \cdot (\sumgb{4}+\sumgb{6})}{3}$.
The allocation is PROP since,
\begin{itemize}
    \item $\sumgb{1}' + \sumgb{2}' + \sumgb{3}+\sumgb{5} \geq \frac{\sumgb{1}+\sumgb{2}-2 \cdot (\sumgb{3}+\sumgb{5})}{3} + \sumgb{3}+\sumgb{5} = \frac{\sumgb{1} + \sumgb{2} + \sumgb{3} + \sumgb{5}}{3} = \frac{V}{3}$;
    \item $\sumgb{1}'' + \sumgb{2}'' + \sumgb{4}+\sumgb{6} \geq \frac{\sumgb{1}+\sumgb{2}-2 \cdot (\sumgb{4}+\sumgb{6})}{3} + \sumgb{4}+\sumgb{6} = \frac{\sumgb{1} + \sumgb{2} + \sumgb{4} + \sumgb{6}}{3} = \frac{V}{3}$;
    \item $\sumgb{1}'''+\sumgb{7} \geq \frac{\sumgb{1}+\sumgb{3}+\sumgb{4}-2 \cdot \sumgb{7}}{3} + \sumgb{7} = \frac{\sumgb{1}+\sumgb{3}+\sumgb{4}+\sumgb{7}}{3} = \frac{V}{3}$.
\end{itemize}

\paragraph{$b_3 > \frac{\sumgb{1}+\sumgb{2}+2 \cdot (\sumgb{3}+\sumgb{4}+\sumgb{5}+\sumgb{6})}{3}$, there is \emph{no} PROP allocation}

In this case, the optimal partition of $\sumgb{1}$ is so unbalanced, that it is not possible to make it PROP using the remaining objects. We prove it using the following Lemma, which claims that if a PROP allocation exists, then, $b_3$ must be at most $\frac{\sumgb{1}+\sumgb{2}+2 \cdot (\sumgb{3}+\sumgb{4}+\sumgb{5}+\sumgb{6})}{3}$, which is infeasible in our current case.

\begin{lemma}
    If there exists a PROP allocation, then the algorithm for \maxminnRwayidenticalsplitpar{3, 1}  must find a partition with   $b_3 \leq \frac{\sumgb{1} + \sumgb{2} + 2 \cdot (\sumgb{3} + \sumgb{4} + \sumgb{5} + \sumgb{6})}{3}$.
\end{lemma}
\proof{Proof}
Suppose there exists a PROP allocation. 
For each agent $i$ and subset ${\cal X}_j$, denote by $F_i(\sumgb{j})$  the total value of objects assigned to $i$ from subset ${\cal X}_j$ in that allocation.
We now show that, for all $i\in\{1,2,3\}$,
$F_i(\sumgb{1})\leq \frac{\sumgb{1} + \sumgb{2} + 2 \cdot (\sumgb{3} + \sumgb{4} + \sumgb{5} + \sumgb{6})}{3}$.

First assume that $F_3(\sumgb{1}) \geq F_2(\sumgb{1}), F_1(\sumgb{1})$.
    By PROP definition, we know that:
    \begin{align*}
        F_1(\sumgb{1}) + F_1(\sumgb{2}) + & F_1(\sumgb{3}) + \sumgb{5} 
        \\ & \geq \frac{\sumgb{1}+\sumgb{2}+\sumgb{3}+\sumgb{5}}{3}; \\ 
        F_2(\sumgb{1}) + F_2(\sumgb{2}) + & F_2(\sumgb{4}) + \sumgb{6} 
        \\ & \geq \frac{\sumgb{1}+\sumgb{2}+\sumgb{4}+\sumgb{6}}{3}.
    \end{align*}
    So,
    \begin{align*}
        F_1(\sumgb{1})  \geq & \frac{\sumgb{1}+\sumgb{2}+\sumgb{3} - 2 \cdot \sumgb{5}}{3} 
        \\ & \hspace{2 cm} - F_1(\sumgb{2}) - F_1(\sumgb{3}); \\
        F_2(\sumgb{1})  \geq & \frac{\sumgb{1}+\sumgb{2}+\sumgb{4}- 2 \cdot \sumgb{6}}{3} 
        \\ & \hspace{2 cm} - F_2(\sumgb{2}) - F_2(\sumgb{4}).
    \end{align*}
    Summing up these inequalities gives:
    \begin{align*}
        F_1(\sumgb{1}) + F_2(\sumgb{1}) & \geq \frac{\sumgb{1}+\sumgb{2}+\sumgb{3}- 2 \cdot \sumgb{5}}{3} 
        \\ & \hspace{1cm}  - F_1(\sumgb{2}) - F_1(\sumgb{3}) 
        \\ & \hspace{1cm}  + \frac{\sumgb{1}+\sumgb{2}+\sumgb{4}- 2 \cdot \sumgb{6}}{3} 
        \\ & \hspace{1cm}  - F_2(\sumgb{2}) - F_2(\sumgb{4}) \\
        & \hspace{-2cm} \geq \frac{2 \cdot (\sumgb{1}+\sumgb{2}) + \sumgb{3} + \sumgb{4} - 2 \cdot (\sumgb{5} + \sumgb{6})}{3} 
        \\ &  - \sumgb{2} - \sumgb{3} - \sumgb{4} \\
        & \hspace{-2cm} = \frac{2 \cdot \sumgb{1} - \sumgb{2} - 2 \cdot (\sumgb{3} + \sumgb{4} + \sumgb{5} + \sumgb{6})}{3}.
    \end{align*}
    Since $F_1(\sumgb{1})+F_2(\sumgb{1})+F_3(\sumgb{1})=\sumgb{1}$, we have $F_1(\sumgb{1})+F_2(\sumgb{1})=\sumgb{1}-F_3(\sumgb{1})$, so:
    \begin{align*}
        & \sumgb{1} - F_3(\sumgb{1}) 
        \\ & \hspace{1cm} \geq \frac{2 \cdot \sumgb{1} - \sumgb{2} - 2 \cdot (\sumgb{3} + \sumgb{4} + \sumgb{5} + \sumgb{6})}{3} \\
        & F_3(\sumgb{1}) 
        \\ & \hspace{1cm} \leq \sumgb{1} -  \frac{2 \cdot \sumgb{1} - \sumgb{2} - 2 \cdot (\sumgb{3} + \sumgb{4} + \sumgb{5} + \sumgb{6})}{3} 
        \\ & \hspace{1cm} \leq \frac{\sumgb{1} + \sumgb{2} + 2 \cdot (\sumgb{3} + \sumgb{4} + \sumgb{5} + \sumgb{6})}{3}.
    \end{align*}
    Using a similar logic, if $F_2(\sumgb{1}) \geq F_3(\sumgb{1}), F_1(\sumgb{1})$, we have 
    \begin{align*}
     F_2(\sumgb{1})  & \leq \frac{\sumgb{1} + 2 \cdot \sumgb{2} + \sumgb{3} + 2 \cdot \sumgb{4} + 2 \cdot \sumgb{5} + 2 \cdot \sumgb{7}}{3} \\
     & =  \frac{\sumgb{1} + \sumgb{2} + 2 \cdot (\sumgb{3} + \sumgb{4} + \sumgb{5} + \sumgb{6})}{3} 
     \\ & \hspace{2cm} + \frac{\sumgb{2}+2 \cdot \sumgb{7}-\sumgb{3}-2 \cdot \sumgb{6}}{3}.
    \end{align*}
Since  $\sumgb{2} \leq \sumgb{3}$ and $\sumgb{7} \leq \sumgb{6}$, $F_2(\sumgb{1}) \leq \frac{\sumgb{1} + \sumgb{2} + 2 \cdot (\sumgb{3} + \sumgb{4} + \sumgb{5} + \sumgb{6})}{3}$.

Again, using similar arguments, if $F_1(\sumgb{1}) \geq F_2(\sumgb{1}), F_3(\sumgb{1})$, we have 
    \begin{align*}
     F_1(\sumgb{1})  & \leq \frac{\sumgb{1} + 2 \cdot \sumgb{2} +  2 \cdot \sumgb{3} + \sumgb{4} + 2 \cdot \sumgb{6} + 2 \cdot \sumgb{7}}{3} \\
     & =  \frac{\sumgb{1} + \sumgb{2} + 2 \cdot (\sumgb{3} + \sumgb{4} + \sumgb{5} + \sumgb{6})}{3} 
     \\ & \hspace{2cm} + \frac{\sumgb{2}+ \sumgb{7}-\sumgb{4}-2 \cdot \sumgb{5}}{3}.
    \end{align*}
Since  $\sumgb{2} \leq \sumgb{4}$ and $\sumgb{7} \leq \sumgb{5}$, $F_1(\sumgb{1}) \leq \frac{\sumgb{1} + \sumgb{2} + 2 \cdot (\sumgb{3} + \sumgb{4} + \sumgb{5} + \sumgb{6})}{3}$.

In all three cases, $b_3$ (which we defined as the largest bin sum) satisfies the claimed inequality.
\Halmos \endproof
\end{alphasection}
\begin{APPENDICES}
\setcounter{section}{2}

\section{Strong generic NP-hardness}
\label{sec:strong-generic-hardness}
\SingleSpacedXII 

In many practical problems on integers, it can be assumed that the input integers are not too large; specifically, that the magnitude of each input integer is polynomial (rather than exponential) in the number of inputs. Equivalently, the binary encoding length of each input integer is logarithmic (rather than polynomial) in the number if inputs.

Formally, denote by $\pinputs$ the set of $t$-sized vectors of integers in $[0,2^{ c\cdot \log_2 t}] = [0,t^c]$ for some constant $c$. The following two classic concepts are used to describe run-time complexity of problems restricted to inputs in $\pinputs$:
\begin{itemize}
\item An algorithm is said to run in \emph{pseudo-polynomial-time} if it runs in time poly$(t)$ when restricted to inputs from $\pinputs$;
\item An algorithmic problem on integers is said to be \emph{strongly NP-hard} if it is NP-hard even when restricted to inputs from $\pinputs$.
\end{itemize}
It is well-known that, for any fixed $n$, the problem \textsf{$n$-way Partition} has a pseudo-polynomial-time algorithm (based on dynamic programming).
Similarly, the \fairwithshared{n,s} and \fairwithsharing{n,s} can be solved in pseudo-polynomial time for any fixed $n$ using dynamic programming. 
In contrast, when $n$ is part of the input, these problems are strongly NP-hard, which means that they do not have a pseudo-polynomial-time algorithm unless \textsc{P=NP}.

We would like to extend the notions of generically-easy and generically-hard to problems restricted to $\pinputs$.
The definitions below are almost identical to the analogous definitions from \Cref{sec:non-degenerate}, except that $t$ is replaced with $\log_2 t$.

The following is analogous to \Cref{def:generically-easy}:
\begin{definition}
\label{def:generically-pseudo-easy}
Given an algorithm $A$ for a problem $P$, 
we say that \emph{$A$ runs generically in pseudo-polynomial time}
if there exists a polynomial function $f_p$ such that, for every size $t$ and input $x\in \pinputs$,
there is a subset of ``Good inputs'' $G(x)\subseteq \ball(x,f_p(\log_2 t))$, such that the following holds:

(a) $A$ runs in time poly($t$) on all inputs in $G(x)$;

(b) The fraction of good inputs approaches 1, that is,
\begin{align*}
\lim_{t\to\infty}
\min_{x\in \inputs}\frac{|G(x)|}
{|\ball(x,f_p(\log_2 t))|} = 1    
\end{align*}

(c) Given $x$, it is possible to compute in time poly($t$) a vector  in $G(x)$.
\end{definition}

The following is analogous to \Cref{def:multireduction-full}:
\begin{definition}[strong multi-reduction]
\label{def:multireduction-strong}
Given two decision problems, $P_1$ defined on inputs in $\inputsa$ for some constant $c_1\geq 1$ and $P_2$ defined on inputs in $\inputsb$ for some constant $c_2\geq 1$, 
a \emph{strong multi-reduction} from $P_1$ to $P_2$  consists of a polynomial-time computable function
$t_2: \mathbb{N} \to \mathbb{N}$ 
and 
a family of functions $h_{t_1}: \pinputsa \to \pinputsba$, 
which maps an input for $P_1$ to an input for $P_2$, and satisfies the following:

(a) $h_{t_1}$ runs in time poly$(t_1)$;

(b) There exists an super-polynomial function $f_e$ such that, for all $t_1$ and all $x_1\in \pinputsa$, when $x_2 := h_{t_1}(x_1)$, 
\begin{align*}
P_2(x_2') = P_1(x_1)
&&
\text{ for all }
x_2' \in \ball(x_2, f_e(\log_2 t_2) ).
\end{align*}
\end{definition}
Note the requirement that $h_{t_1}$ map each input to $\pinputsa$ rather than to $\inputsb$; it implies that the reduction must keep the magnitude of the input integers polynomial in the number of inputs.

The following is analogous to \Cref{def:generically-np-hard}:
\begin{definition}
\label{def:generically-strong-np-hard}
A decision problem $P_2$ is called \emph{generically strongly NP hard} if there exists a \emph{strong} multi-reduction from some \emph{strongly} NP-hard problem $P_1$ to $P_2$.
\end{definition}

The following  is analogous to the classic statement that a strongly NP-hard problem does not have a pseudo-polynomial-time algorithm unless \textsc{P=NP}.
Its proof is almost identical to that of  \Cref{thm:generically-easy-is-not-generically-hard}:
\begin{proposition}
If a problem is generically-strongly-NP-hard,
then it does not have a generically-pseudo-polynomial-time algorithm unless \textsc{P=NP}.
\end{proposition}
\proof{Proof}
Let $P_2$ be a decision problem that satisfies the theorem assumptions:

(a) There is a generically-pseudo-polynomial algorithm $A_2$ for $P_2$; denote by $f_p$ the polynomial function in \Cref{def:generically-pseudo-easy}.

(b) There is a multi-reduction $h_{t_1}$ from some NP-hard problem $P_1$ to  $P_2$; denote by $f_e$ the super-polynomial function in \Cref{def:generically-np-hard}.

Since $f_p$ is polynomial and $f_e$ is super-polynomial, there is some $t_0$ such that $f_e(\log_2 t) > f_p(\log_2 t)$ for all $t>t_0$.
We show a polynomial-time algorithm for solving $P_1$ on all inputs of size $t_1$ such that $t_2(t_1)>t_0$.

Given an input $x_1\in \pinputsa$ to problem $P_1$, we
use the assumed multi-reduction $h_{t_1}$ to compute in polynomial time an input $x_2 \in \pinputsb$ to problem $P_2$.

By \Cref{def:generically-pseudo-easy}(a,c), it is possible to compute in time poly$(t_2)$
another input $x_2' \in \ball(x_2, f_p(\log_2 t_2))$, such that $A_2$ runs on $x_2'$ in time poly$(t_2)$.

But by definition of multi-reduction (\Cref{def:multireduction-strong}), 
$P_2(x_2'') = P_1(x_1)$ for all $x_2'' \in \ball(x_2, f_e(\log_2 t_2))$.
In particular, since $f_e(\log_2 t_2) > f_p(\log_2 t_2)$, this holds for $x_2'$ too, so 
$P_2(x_2') = P_1(x_1)$.

Therefore, by running $A_2$ on $x_2'$, we get the correct answer to $P_1(x_1)$.

As $P_1$ is strongly-NP-hard, it is NP-hard even when restricted to inputs in $\pinputsa$.
We have shown that $P_1$ can be solved in polynomial time for all inputs in $\pinputsa$ for sufficiently large $t_1$; hence \textsc{P=NP}.
\Halmos \endproof

We are now ready to extend our hardness results to fair allocation problems in which $n$ is unbounded.

\begin{theorem}
\label{thm:unbounded-n-sharing-nondegenerate}
For every $s \geq 0$
\propwithsharing{s} and \efwithsharing{s} 
are 
generically strongly NP-hard.
\end{theorem}

\proof{Proof}
We show a strong-multi-reduction from $P_1 = $  \textsc{3-Partition}, which is known to be strongly NP-hard.
The input to $P_1$ is a finite multiset $D$ of $t_1 = 3 p$ positive integers $d_1,\ldots,d_{3 p}$,
each of which is in $[0, {t_1}^{6}]$.%
\footnote{
For the strong NP-hardness proof in \citet{garey1979computers}, $c_1=6$ is sufficient.
}
The sum of all integers is $p S$, for some $p\geq 2$ and integer $S$.
The task is to decide if the integers can be partitioned into $p$ triplets such that the sum of each triplet is exactly $S$. 
We denote $L := {t_1}^{6}$.
Note that we can assume w.l.o.g. that each integer in $D$  is in $(S/4,S/2)$, so every set with sum exactly $S$ must be a triplet.

We construct an instance of \fairwithsharing{s} with $n=p+s+1$ agents and $m=3 p+n+1 = 4 p+s+2$ objects. 
Therefore, $t_2 = m n = (p+s+1)(4p + s + 2)$, which is in $O({t_1}^2)$.
The valuations are shown in the table below.

~~~~

\begin{center}
\begin{tabular}{|c||c|c|c|}
\hline
Agent $i$ / Object $o$: 
& 
\shortstack{
$o\in 1,\ldots,3p$ 
\\
(``usual objects'')
}
& 
\shortstack{
$o\in 3p+1,\ldots,3p+n$ 
\\
(``compensation objects'')
}
&
\shortstack{
$o = 3p+n+1$ 
\\
(``special object'')
}
\\
\hline
\hline
\shortstack{$i\in 1,\ldots,p$
\\
(``usual agents'')}
&  
\shortstack{
$L d_o + L S + L/(16 m)$;
}
&  
\shortstack{
$3L/8$ for $o=3p+i$;
\\
$L/(16 m n)$ for $o\neq 3p+i$.
}
&
\shortstack{
$L\cdot (s+1)\cdot 4 S + L/16$;
}
\\
\hline
\shortstack{$i\in p+1,\ldots,n$
\\
(``special agents'')}
 &  
 $L/(16m)$;
 &
  $L/(16m)$;
 &
 \shortstack{
 $L\cdot n \cdot S + L/(16 m)$;
 }
 \\
\hline
\end{tabular}
\end{center}

~~~~

All valuations are in $O(L^2) = O({t_1}^{12})$, so we can let $c_2 := 12$.

For the multi-reduction, we will use the function $f_e(\log t_2) = L/(64 m n) = {t_1}^6 / (64 t_2) \in \Omega({\sqrt{t_2}}^6 / t_2) = \Omega({t_2}^2)$, which is indeed super-polynomial in $\log(t_2)$.

Suppose $D$ has a partition into $p$ triplets, $D_1,\ldots, D_p$, with sum $S$ each.
We give each usual agent $i$ the three usual objects corresponding to the triplet $D_i$, and the compensation object $3p+i$. 
We divide the special object equally among the $s+1$ special agents, using $s$ sharings. 
We show that the resulting allocation is EF (and hence PROP) not only for the constructed valuation matrix, but also for any valuation matrix with infinity-distance of at most $L/(64 m n)$. In all such instances:

- Each usual agent values usual object $o$ at least $L d_o + LS + L/(16m) - L/(64 m n) > L d_o + L S$, and values his compensation object at least $3L/8 - L/(64 m n) > 2L/8$, so he values his own bundle at least $L S + 3 L S + 2L/8 = 4LS+2L/8$.
On the other hand, each usual agent values usual object $o$ at most $L d_o + LS + L/(16m) + L/(64 m n) < L d_o + L S + L/(8m)$, every other compensation object at most $L/(16mn)+L/(64mn) < L/(8mn)$, and the special object at most $L(s+1)\cdot 4S + L/8$, 
so he values the bundle of every other agent (usual or special) at most $4 L S + L/8$. Therefore the usual agents do not envy.

- Each special agent values his own bundle, as well as all special agents' bundles, at exactly the same value, which is at least $(n L S)/(s+1) > L S$.
On the other hand, he values any non-special object at most $L/(8m)$, so values the bundles of usual agents at most $L/8$, so the special agents too do not envy.

Conversely, suppose there is a PROP allocation with at most $s$ sharings. 
Then:

- Each special agent must receive a value larger than $(n L S)/ n = L S$. Since their total value for all objects except the special object is at most $L/8$, each special agent must receive a part of the special object. This means that all $s$ sharings are of the special object, so the usual and compensation objects must be allocated without sharing.

- Each usual agent values all objects together more than 
$4 p L S + 4(s+1)L S = 4 n L S$, so must receive a value larger than $4 L S$. 
The value of each potential bundle without special objects comes from integer multiples of $L$ (--- the $L\cdot d_o + LS$ value of usual objects), and fractions of $L$ (--- the $L/(16 m)$ value of usual objects, and the values of compensation objects). The sum of all these fractions is strictly less than $L$. Since no more sharings are allowed in the special, each usual agent must receive usual objects with no sharings, with total value at least $4 L S$. 

This induces a partition of $D$ into $p$ subsets with sum $\sum_o d_o \geq S$. As the sum of all $D$ is exactly $p S$, the sum of each part in the latter partition must be exactly $S$. As each integer in $D$ is in $(S/4,S/2)$, all parts must be triplets.
\Halmos \endproof

\begin{theorem}\label{thm:shared-strong-np-degenerate}
For any fixed $s\geq 0$, \propwithshared{s} and  \efwithshared{s} are strongly NP-hard.
\end{theorem}
\proof{Proof}
The proof is very similar to that of  \Cref{thm:hardness-fair-with-shared}, but adapted to a strong-multi-reduction from a strongly-NP-hard problem.

We show a multi-reduction from
the case of identical valuations, proved to be strongly-NP-hard in \cite{DBLP:journals/corr/abs-2204-11753}.

An input of size $t_1$ consists of a list $D$ of $t_1-1$ items,
whose values are integers $[d_1,\ldots,d_{t_1-1}]$
in $[0, {t_1}^{c_1}]$, where $c_1$ is some constant, $c_1\geq 6$.
The $t_1$-th input is $n$, which should be at most $t_1$.
The sum of all values is $\sum_{o=1}^{t_1-1} d_o = n S$.
We have to decide if they can be partitioned into $n$ bins with sum $S$ each, with at most $s$ items split between two or more bins.
Let $L := {t_1}^{c_1}$.

We construct an instance of \fairwithshared{s} with $n$ agents and $m = t_1-1+n$ objects. Hence, $t_2 = m n + 1 = n t_1 + n^2 + 1$, which is in $O({t_1}^2)$.
All agents' valuations will be integers in $[0, {t_2}^{c_2}]$, where $c_2 := 2 c_1$.
Recall that ``$/$'' denotes integer division.
\begin{itemize}
\item The objects $o \in \{1,\ldots, t_1-1\}$ are \emph{usual objects}. 
Each agent $i\in[n]$ values each usual object $o$ at $L \cdot d_o + L / (16 m n)$.
\item The objects $o \in \{t_1,\ldots, t_1-1+n\}$ are \emph{compensation objects}. 
Each agent $i\in[n]$ values the compensation object $t_1-1+i$ at 
$ 3 L / (8n)$
and every other compensation object at 
$L / (16 m n)$.
\end{itemize}
Note that all valuations are integers smaller than $L^2 + L$, which is indeed smaller than ${t_2}^{c_2}$.
For the multi-reduction we use the function  
$f_e(\log t_2) := L/(64 m n) = {t_1}^{c_1} / (64 (t_2-1))$. Note that $f_e(\log t_2)\in \Omega(\sqrt{t_2}^{6} / t_2) = \Omega({t_2}^2)$, which is indeed super-polynomial in $\log t_2$.

Suppose that there exists a partition of $D$ into $n$ subsets $(D_1,\ldots, D_n)$ with sums equal to $S$, with at most $s$ shared items.
We construct an allocation of the objects in our instance by giving each agent $i$ the set of usual objects  corresponding to the items in $D_i$ (including fractions), as well as the compensation object $t_1-1+i$.
The number of shared items remains unchanged,  at most $s$.

We now prove that the allocation is EF (and hence PROP) not only for the constructed instance, but also for any instance with infinity distance at most $L/(64 m n)$.
In all such instances, each agent values every usual object $o$ at least $L d_o + L/(16 m n) - L/(64 m n) > L d_o$  and his compensation object at least $3 L / (8 n) - L/(64 m n) > 2 L / (8 n)$, so values his own bundle at least $L S + 2 L/(8 n)$; and values the bundle of every other agent at most $S+L/(8n)$. Hence the allocation is EF and also PROP.

Conversely, suppose there is a PROP allocation (or an EF allocation, which is always PROP), with at most $s$ shared objects.
For each agent $i$, the sum of all object values is larger than $L \cdot n S$,
so PROP requires each agent to receive a value larger than $L S$. 

Denote by $D_i$ the set of items from $D$ (including fractions) corresponding to the usual objects given to agent $i$.

The value of each potential bundle comes from integer multiples of $L$ (--- the $L\cdot d_o$ value of usual objects), and fractions of $L$ (--- the $L/(16 m n)$ value of usual objects, and the values of compensation objects). The sum of all fractions is strictly less than $L/n$.
Therefore, to get a value of at least $L S$, each agent must receive usual objects with 
$\sum_{o\in D_i} L d_o > L S-L/n$,
which implies 
$\sum_{o\in D_i} d_o > S-1/n$ for each agent $i\in[n]$.
Since 
$\sum_{o\in D} = n S$,
this implies 
$\sum_{o\in D_j} d_o < S+1$ for each agent $j\in[n]$.
Therefore, each bundle $D_i$ with 
$\sum_{o\in D_i} d_o \neq S$
must have fractions of items (as all item values in $D$ are integers).
Therefore, we can move fractions of items from bundles $D_i$ with sum larger than $S$ to bundles $D_j$ with sum smaller than $S$, without increasing the number of shared items. 
By iteratively moving these fractions, we can construct in polynomial time a partition of $D$ with at most $s$ shared items, in which all bin sums equal $S$.
\Halmos \endproof

\section{Hardness of finding an LP solution with fewer nonzero variables}
\label{sec:lp-bounds}
As the number of sharings in an allocation is related to the number of nonzero variables in a solution of an LP, it could be very helpful to have an algorithm for finding a solution minimizing the number of nonzero variables in a given LP. 
This problem is known as \emph{Min-RVLS} --- Minimum Relevant Variables in a Linear System. 
Unfortunately, Min-RVLS is NP-hard, and 
 it is also hard to find multiplicative approximations to the minimum \citep{amaldi1998approximability}.
Moreover, even finding a very modest additive approximation is NP-hard:
\begin{theorem}
\label{thm:lp-hard}
Unless P=NP, there is no polynomial-time algorithm that accepts as input a linear program with $k$ constraints and decides whether it has a feasible solution with at most $k-1$ nonzero variables.
\end{theorem}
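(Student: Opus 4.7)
The plan is to reduce from \textsc{Partition} in a very direct way. Given a \textsc{Partition} instance with positive integers $a_1,\ldots,a_p$ summing to $2T$, I would build an LP with $n = 2p$ non-negative variables $x_1,\ldots,x_p, y_1,\ldots,y_p$ and $k = p+1$ equality constraints, namely
\begin{align*}
x_i + y_i &= 1 && \text{for each } i \in [p], \\
\sum_{i=1}^p a_i\, x_i &= T.
\end{align*}
The LP is feasible (take $x_i = y_i = 1/2$ for all $i$, which satisfies both groups of constraints) and, by the standard basic-feasible-solution bound already used in the proof of Theorem~\ref{thm:upperbound}, any BFS has at most $k = p+1$ nonzero variables.

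The crux of the argument is a matching support lower bound. Because $x_i + y_i = 1$ with $x_i,y_i \geq 0$, at least one of $x_i, y_i$ is strictly positive for every $i \in [p]$, so any feasible solution has at least $p$ nonzero variables. A feasible solution with exactly $p$ nonzeros must therefore have exactly one of $\{x_i,y_i\}$ positive for each $i$, and that variable must then equal $1$; in other words, $x_i \in \{0,1\}$. The final constraint $\sum a_i x_i = T$ then reduces exactly to selecting a subset of $\{a_1,\ldots,a_p\}$ summing to $T$. Hence the LP admits a feasible solution with at most $k-1 = p$ nonzero variables if and only if the \textsc{Partition} instance has a YES answer, giving the desired NP-hardness.

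I do not expect a major obstacle. The only subtle point is a conventional one: by the $k$ constraints of the LP we mean the equality constraints $Ax = b$, not the non-negativity constraints $x \geq 0$, consistent with the BFS bound used elsewhere in the paper. Since the constructed LP has bit-size polynomial in the \textsc{Partition} input, the reduction runs in polynomial time, so a polynomial-time decision procedure for the ``$\leq k-1$ nonzeros'' question would decide \textsc{Partition} in polynomial time, contradicting $\mathrm{P} \neq \mathrm{NP}$.
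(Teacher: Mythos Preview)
Your proposal is correct and is essentially identical to the paper's own proof: both reduce from \textsc{Partition} by introducing, for each item $a_i$, a pair of non-negative variables constrained to sum to $1$, plus one balance constraint, so that a feasible solution with at most $k-1$ nonzeros forces each pair to be $\{0,1\}$-valued and hence encodes a subset summing to $T$. Your write-up is in fact slightly more explicit than the paper's about why exactly $p$ nonzeros forces integrality.
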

\proof{Proof}
The proof is by reduction from \textsc{Partition}. 
Suppose we are given $k$ numbers $a_1,\ldots,a_k$ whose total sum is $2 S$, and have to decide whether they can be partitioned into two subsets such that the sum in each subset is $s$. 
This problem can be solves using an LP with $2 k$ variables. For each  $i\in[2]$ and $j\in [k]$, the variable $x_{ij}$ determines what fraction of the number $a_j$ is in subset $i$: 
\begin{align*}
\\
&& 
x_{1j}+x_{2j} = 1 && \text{ for all } j\in[k]
\\
&& 
\sum_{j=1}^m x_{1j} a_j =  \sum_{j=1}^m x_{2j} a_j &&
\\
&& 
x_{ij} \geq  0 && \text{ for all } i\in[n], j\in[k]
\end{align*}
There are $k+1$ equational constraints, so there always exists a solution with at most $k+1$ nonzeros. In this solution, at most one number is ``cut'' between the sets. Indeed, it is easy to solve the partition problem if we are allowed to cut one number: just order the numbers on a line and cut the line into two parts with equal sum. 

Now, if we could solve the ``few nonzeros'' problem, then we could decide whether the above LP has a solution with at most $k$ nonzeros, which would imply a solution to the partition problem (in which no number is cut). 
\Halmos \endproof
\end{APPENDICES}


\ACKNOWLEDGMENT{This work was inspired by Achikam Bitan, Steven Brams and Shahar Dobzinski, who expressed their dissatisfaction with the current trend of approximate-fairness (SCADA conference, Weizmann Institute, Israel 2018). We are grateful to participants of De Aequa Divisione Workshop on Fair Division (LUISS, Rome, 2019), Workshop on Theoretical Aspects of Fairness (WTAF, Patras, 2019) and the rationality center game theory seminar (HUJI, Jerusalem, 2019) for their helpful comments.  Suggestions of Herve Moulin, Antonio Nicolo, Nisarg Shah and Rohit Vaish were especially useful.

Several members of the stack-exchange network 
provided helpful answers, in particular: 
D.W, Peter Taylor, 
	Gamow, Sasho Nikolov,
	Chao Xu, Mikhail Rudoy,
	xskxzr, 
	Philipp Christophel, Kevin Dalmeijer,
	Bodo Manthey,
	and Niklas Rieken.
	
	We are grateful to the anonymous referees of WTAF 2019 and SAGT 2024 for their helpful comments.
}

\bibliographystyle{informs2014}
\bibliography{main,smooth}
\end{document}